\documentclass[a4paper,UKenglish,usenames]{lipics-v2021}
\usepackage{amsthm}
\usepackage{complexity}
\usepackage{thmtools}

\pdfoutput=1 
\hideLIPIcs  

\hypersetup{
	colorlinks=true,
	linkcolor=blue,
	filecolor=magenta,
	citecolor = black,      
	urlcolor=cyan,
}
\newcommand{\mcU}{\ensuremath{\mathcal{U}}}
\newcommand{\mcV}{\ensuremath{\mathcal{V}}}

\newcommand{\mcA}{\ensuremath{\mathcal{A}}}
\newcommand{\mcB}{\ensuremath{\mathcal{B}}}
\newcommand{\mcI}{\ensuremath{\mathcal{I}}}

\newcommand{\apr}[1]{\ensuremath{\tilde{F}_{#1}}}

\newcommand{\strings}{\ensuremath{\{0,1\}^d}}

\newcommand{\floor}[1]{\ensuremath{\left\lfloor #1 \right\rfloor}}
\newcommand{\Hd}{\ensuremath{\mathcal{H}_{d/2}}}
\newcommand{\var}{\ensuremath{\texttt{var}}}

\newcommand{\OR}{\ensuremath{\mathsf{OR}}}
\newcommand{\AND}{\ensuremath{\mathsf{AND}}}

\newcommand{\ORANDOR}{\ensuremath{\OR\circ\AND\circ\OR}}

\newcommand{\kov}[3]{\ensuremath{\mathsf{#1OV}_{#2,{#3}}}} 
\newcommand{\kovnd}{\kov{k}{n}{d}} 

\newcommand{\kayov}{\ensuremath{\mathsf{kOV}}}
\newcommand{\OVnd}{\mathsf{OV}_{n,d}}
\newcommand{\OV}{\ensuremath{\mathsf{OV}}}

\newcommand{\kint}[3]{\ensuremath{\mathsf{#1Int}_{#2,{#3}}}} 
\newcommand{\kintnd}{\kint{k}{n}{d}} 
\newcommand{\twoint}{\kint{2}{n}{d}}

\newcommand{\clique}{\ensuremath{\mathsf{Clique}}}
\newcommand{\hypcliquemrl}{\ensuremath{\mathsf{HypClique}_{r,\ell}^m}}
\newcommand{\Intnd}{\ensuremath{\mathsf{Int}_{n,d}}}
\newcommand{\Int}{\ensuremath{\mathsf{Int}}}

\newcommand{\Dz}{\mathbf{D_0}}
\newcommand{\Do}{\mathbf{D_1}}

\newcommand{\customComment}[2]{}
\newcommand{\karteek}[1]{\customComment{K}{#1}}
\newcommand{\nutan}[1]{\customComment{N}{#1}}
\newcommand{\srikanth}[1]{\customComment{S}{#1}}
\newcommand{\tameem}[1]{\customComment{T}{#1}}

\title{New and Improved Concrete Lower Bounds for Orthogonal Vectors} 

 \author{Tameem Choudhury}{IIT Hyderabad, Hyderabad, India}{cs20resch11004@iith.ac.in}{https://orcid.org/0000-0002-5044-9717}{}
 \author{Nutan Limaye}{IT University of Copenhagen, Copenhagen, Denmark}{nuli@itu.dk}{https://orcid.org/0000-0002-0238-1674}{}
 \author{Karteek Sreenivasaiah}{University of Liverpool, Liverpool, UK}{karteek.sreenivasaiah@liverpool.ac.uk}{https://orcid.org/0000-0001-7396-3383}{}
 \author{Srikanth Srinivasan}{University of Copenhagen, Copenhagen, Denmark}{srsr@di.ku.dk}{https://orcid.org/0000-0001-6491-124X}{}

\authorrunning{T Choudhury, N Limaye, K Sreenivasaiah, S Srinivasan } 

\Copyright{CC-BY} 

\ccsdesc[500]{Theory of computation~Circuit complexity}
\ccsdesc[500]{Theory of computation~Problems, reductions and completeness}  

\keywords{monotone circuit lower bounds, OV, approximation method} 

\category{} 

\relatedversion{} 

\nolinenumbers 

\EventEditors{John Q. Open and Joan R. Access}
\EventNoEds{2}
\EventLongTitle{42nd Conference on Very Important Topics (CVIT 2016)}
\EventShortTitle{CVIT 2016}
\EventAcronym{CVIT}
\EventYear{2016}
\EventDate{December 24--27, 2016}
\EventLocation{Little Whinging, United Kingdom}
\EventLogo{}
\SeriesVolume{42}
\ArticleNo{23}

\begin{document}
	
	\maketitle
	
	\begin{abstract}
          The Orthogonal Vectors Problem ($\OVnd$) takes as input
          two sets $A,B$ each containing $n$ $d$-dimensional Boolean
          vectors, and outputs $1$ if and only if there exists
          $a \in A$ and $b \in B$ such that $a$ and $b$ are
          orthogonal. The $\OV$ conjecture states that for every
          $\varepsilon > 0$, there exists a constant $c \geq 1$ such
          that there is no algorithm deciding $\OVnd$ for
          $d = c \log n$ with running time $O(n^{2-\varepsilon})$. The analogous $\kayov$ conjecture hypothesizes a lower bound of $n^{k-\epsilon}$ for the same problem with $k$ sets. We prove these results and variants unconditionally in concrete computational models.

          \begin{itemize}
              \item We study a natural \emph{monotone}
          version of the $\kayov$-conjecture and shows that it holds for \emph{monotone} circuits and \emph{constant-depth} (not necessarily monotone) circuits when $d = n^{\Omega(1)}.$
          \item We show that the monotone version of the $\OV$ conjecture
          holds for \emph{monotone circuits}. More formally, we show
          that for every $\epsilon > 0$, there exists $c$ such that
          any monotone circuit family computing the negation of
          $\OVnd$ with $d=c\log n$ must have size
          $\Omega(n^{2-\epsilon})$.
          \item We also prove stronger Boolean formula and branching program lower bounds for $\OVnd$, strengthening a previous result of Kane and Williams (ITCS 2019). In particular, our Boolean formula lower bound of $\Omega(n^2 d)$ is tight up to constant factors.
          \end{itemize}


	\end{abstract}
	\section{Introduction}

        \noindent
        \textbf{Motivation.}
        The Orthogonal Vectors ($\OVnd$) problem is the
        computational problem of deciding if among two given sets $A$
        and $B$ of $n$ $d$-dimensional Boolean vectors, there exists a
        vector $a\in A$ and a vector $b\in B$ such that $a$ and $b$
        are orthogonal. This problem has received a lot of attention
        over the past decade for being able to capture the hardness of several
        important computational problems that are seemingly
        unrelated. Some important examples of these include \emph{edit
          distance} \cite{backursi15}, \emph{subset sum}
        \cite{abboudbhs22},  and \emph{longest common subsequence}
        \cite{abboudbw15}.  It has been established that if any of
        these problems has an algorithm that runs in truly
        sub-quadratic time, then so does $\OV$. For a more thorough treatment of this line of work, we refer the reader to the survey by Vassilevska Williams \cite{williams2018some}. 
        This body of research has
        led to the following natural conjecture about the time
        complexity of deciding $\OVnd$. 
         \begin{conjecture}[$\OV$ conjecture]
         \label{conj:OVC}
             For every $\epsilon > 0$,
        there exists $c\ge 1$ such that there is no algorithm
        (deterministic or randomized) deciding $\OVnd$ for
        $d = c\log n$ with running time
        $O(n^{2-\epsilon})$.
        \end{conjecture} 
        In particular, when
        $d\in \omega(\log n)$, the conjecture states that there is no truly subquadratic
        algorithm deciding $\OVnd$. It should be noted that when $d < \log n$, there are indeed
        sub-quadratic algorithms known \cite{Williams24}. Also, when $d = c\log n$, 
        Abboud, Williams and Yu \cite{abboudwyh15} show that $\OVnd$ 
        can be decided by a randomized algorithm in time $O(n^{2-1/O(\log c)})$. 
        This was later derandomized by Chan and Williams \cite{ChanWilliams21}.

        
        As positive evidence for the $\OV$ conjecture, Williams
        \cite{williams2005new} showed that the Strong Exponential Time
        Hypothesis (SETH) \cite{calabroip10,impagliazzop01} implies
        the $\OV$ conjecture. Thus, proving the latter can also be seen as a stepping stone to SETH.

        The fact that $\OV$ captures the hardness of a wide variety of problems makes it very
        interesting to ask if we can prove these conjectures
        unconditionally in restricted models of
        computation. The relevance of this line of research is twofold: the first, to rule out proposed algorithmic paradigms for these problems and the second, to find structural characteristics of the problem that will hopefully lead to an unrestricted lower bound in the future.\srikanth{Added a sentence to motivate. Maybe we should also look at Kane-Williams?}
        
        A recent line of
        research has been studying precisely this question. Kane and
        Williams \cite{kane2019orthogonal} showed strong lower bounds
        for the size of Boolean Formulas and Branching Programs computing
        $\OV$ thus proving the $\OV$-conjecture for these two
        models. Choudhury and Sreenivasaiah \cite{ChoudhuryTOCT25} show
        that this conjecture is true when
        computation is restricted to depth-3 $\ORANDOR$ circuits with
        constant bottom fan-in. 
        
        This latter work even proves such a result for the more general \emph{$\kayov$-conjecture}. In the $\kayov_{n,d}$ problem, the input is $k$ lists $A_1,\ldots, A_k \subseteq \{0,1\}^d$ of size $n$ each and the question is to decide if there exist vectors $a_1\in A_1,\ldots, a_k\in A_k$ such that $a_1\cap a_2\cap\ldots a_k = \emptyset$ (here we identify each Boolean vector $a_i\in \{0,1\}^d$ with a subset of $[d]$ in the natural way). The $\kayov$ conjecture, analogous to Conjecture~\ref{conj:OVC} is as follows. 
        
        \begin{conjecture}[$\kayov$ conjecture]
         \label{conj:kOVC}
             For every $\epsilon > 0$ and $k\geq 2$,
        there exists $c\ge 1$ such that there is no algorithm
        (deterministic or randomized) deciding $\kayov_{n,d}$ for
        $d = c\log n$ with running time
        $O(n^{k-\epsilon})$.
        \end{conjecture} 
        The complexity of $\kayov$ also has intimate connections with
        several important functions including graph diameter
        \cite{backursRSWW18} and longest common subsequence
        \cite{abboudbw15}. (See 
        \cite{williams2018some} for more.)\\

        \noindent
        \textbf{Results.} We now describe the results of this paper, which yield new and stronger statements along these lines.\\

        \noindent
        \textbf{Monotone and constant-depth Boolean circuits.} A major focus of this paper is on \emph{monotone}
        computation. Observe that the negation of $\OV$  (and more generally $\kayov$) is a monotone
        function.\footnote{A monotone function $f:\{0,1\}^n\rightarrow \{0,1\}$ is one that satisfies the property  $x\leq y \Rightarrow f(x)\leq f(y)$.} 
        We define the function $\kintnd:(\{0,1\}^d)^n\times (\{0,1\}^d)^n\to \{0,1\}$ as the negation of $\kayov_{n,d}$. 
        \begin{definition}[$\kintnd$] \label{def:Int}
            Fix any $k\geq 2.$
    		For tuples $A_1,A_2,\ldots,A_k \subseteq \{ 0,1\}^d$ with $|A_1| = \cdots = |A_k| = n$,
    		\begin{align*}
    			\kintnd(A_1,\ldots,A_k) = 1 \iff& \forall a_1\in A_1,\ \forall a_k\in A_k,\   \text{ we have }
    			 a_1 \cap a_2\cap \cdots a_k  \neq \emptyset. 
    		\end{align*}
            (where we interpret each Boolean vector $a_i$ as the characteristic vector of a subset of $[d]$.) In the particular case that $k= 2$, we use $\Intnd$ to denote $\kintnd.$
    	\end{definition}    
        In light of Conjectures~\ref{conj:OVC} and \ref{conj:kOVC}, it is natural to study the complexity of $\Intnd$ and $\kintnd$ for \emph{monotone} circuits\footnote{That is, Boolean circuits made up of only $\mathsf{AND}$ and $\mathsf{OR}$ gates. Such circuits can compute all monotone functions.} and prove that any monotone Boolean circuit for this problem must have size at least $n^{2-\epsilon}$ for $d = c\log n$ (where $c = c(\epsilon)$ is large enough). If Conjecture~\ref{conj:OVC} is true for general Boolean circuits, then this monotone lower bound must also hold. On the other hand, we note that the
        power of monotone circuits is likely to be incomparable with
        non-monotone models such as depth-$3$ circuits, Boolean formulas and branching programs studied in earlier work. In particular,
        there are $\cP$-complete problems that have small monotone
        circuits and it is unlikely that these can be simulated by
        formulas or branching programs in polynomial size (this can be
        proved unconditionally for depth-$3$ circuits). This
        makes the monotone question interesting.

        Another computational model we study in this paper is the class of \emph{constant-depth circuits}.\footnote{These are Boolean circuits made up of $\mathsf{NOT}$ gates and $\AND, \OR$ gates of unbounded fan-in. The depth refers to the amount of nesting in the circuit, or alternatively the length of the longest input to output gate path. } This circuit class can  implement many non-trivial algorithms such as the colour-coding algorithms for subgraph isomorphism~\cite{AYZ} and is incomparable with models studied in previous work (and provably stronger than the depth-$3$ model studied in~\cite{choudhuryS24}). 

        Our first result is a proof of a weak form of Conjecture~\ref{conj:kOVC} (and thus also a weak form of Conjecture~\ref{conj:OVC}) in the above two settings. The reason this is a `weak form' is that this proof works when the dimension $d$ of the input vectors is $n^{\Omega(1)}$.

        \begin{restatable}{theorem}{reductionthm}
        \label{thm:reductionthm}
          Fix any constants $k, h\geq 2$ and any constant $\epsilon > 0$. Any monotone circuit or any depth-$h$ (not necessarily monotone) circuit computing $\kintnd$ for $d = n^\epsilon$ must have size at least $n^{k-\epsilon}.$\footnote{Since the constant-depth circuit model is not monotone, we could also have stated the lower bound for this model directly for $\kovnd$. }
        \end{restatable}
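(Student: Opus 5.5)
The plan is to reduce from the $k$-clique problem (or rather $k$-partite clique detection) on $N$-vertex graphs, for which strong lower bounds are known in both models. For monotone circuits, Razborov/Alon--Boppana give $N^{\Omega(\sqrt{k})}$, which is not strong enough. The sharper tool is the known $\Omega(n^{k/4})$-type bound for constant-depth circuits (Rossman) and its monotone analogue. These bounds alone do not give $n^{k-\epsilon}$, so instead I would reduce from a sparse problem whose complexity is essentially $n^{k}$ in both models. Concretely, I would take the \emph{$k$-partite hypergraph clique} problem $\hypcliquemrl$ or, more simply, the $k$-partite $k$-uniform ``$k$-hyperedge detection'' problem on parts of size $m$. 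Hyperedge detection is trivial, so the right source is $\kintnd$'s natural combinatorial dual: a function on $\approx n^{k}$ input bits (one bit per $k$-tuple) that requires size close to the number of relevant minterms.

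Concretely, the steps are as follows. (i) Fix a source function $F$ on $M$ variables with known monotone and depth-$h$ lower bound $M^{1-o(1)}$ where $M\approx n^k$. A natural candidate is the $k$-partite $\ell$-uniform hypergraph clique problem with $r=k$ parts of size $m$ and uniformity $\ell=k-1$. Here each $k$-clique corresponds to choosing one vertex per part; the known lower bounds (Rossman-style for $\AC^0$ and monotone) give $m^{k-o(1)}$ when $\ell$ is large relative to $k$. I would use $\hypcliquemrl$ with $\ell$ chosen large enough that the exponent is $k-\epsilon/2$, taking parts of size $n$. (ii) Give a \emph{projection} (monotone, depth-$0$) reduction from (the complement of) the clique function to $\kintnd$ with $d=n^{\epsilon}$. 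Each vector $a_i\in A_i$ encodes a vertex $v$ of part $i$. The coordinates of $[d]$ index potential hyperedges/non-edges, arranged so that $a_1\cap\cdots\cap a_k=\emptyset$ iff $v_1,\dots,v_k$ form a clique. Each coordinate of each vector is either a constant or a literal of an edge variable. (iii) Since a projection preserves size and depth, and monotonicity whenever the literals are positive, the lower bound for the source transfers to $\kintnd$.

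The main obstacle is step (ii): the dimension budget. A coordinate must witness a \emph{missing} hyperedge among $v_1,\dots,v_k$, so a naive encoding uses one coordinate per potential hyperedge, giving $d\approx n^{\ell}$, far above $n^{\epsilon}$. To fix this, I would shrink the source instance. I would take parts of size $n$ but restrict the hypergraph's vertex set through a blow-up: vertices in part $i$ are tuples over a small alphabet of size $s=n^{\epsilon/\ell}$. Hyperedges are then determined by the restrictions to a few coordinates, so the number of distinct ``hyperedge slots'' touching a $k$-tuple becomes about $s^{\ell}k^{\ell}\le n^{\epsilon}$. Equivalently, I would apply the lower bound for clique-type problems in which edge variables are shared via a product structure, and check that the lower-bound proofs (approximation method / Rossman's sunflower-based argument) survive this sharing with exponent loss $o(1)$ as $\ell$ grows.

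A cleaner alternative for the monotone part is to run the approximation method directly on $\kintnd$. The positive test inputs are random $k$-tuples of lists with pairwise-intersecting structure, and the negative inputs plant a single orthogonal $k$-tuple. The sunflower lemma would then bound the approximator's error by roughly $n^{-k+\epsilon}$ per gate. For the depth-$h$ part, the constant-depth bound would come from the same reduction plus Rossman-type bounds.

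Steps (i) and (iii) are routine once the reduction in (ii) is in place. I expect the need to keep $d=n^{\epsilon}$ while preserving an exponent of $k-\epsilon$ to be the crux.
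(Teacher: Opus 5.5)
There is a genuine gap, and it sits exactly at the step you flag as the crux.

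\textbf{The encoding forces $d$ too large.} Each of your vectors encodes a single vertex. So the clique you detect has size $k$, $n$ is the part size, and one coordinate per potential (non-)hyperedge forces $d \approx n^{u}$, where $u$ is the uniformity.

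\textbf{The repair rests on an unproved bound.} Your fix treats vertices as tuples over an alphabet of size $n^{\epsilon/\ell}$, with hyperedges determined by a few tuple coordinates. This turns the source into a partite hyperclique problem with many parts and shared edge variables, which is essentially the reduction of \cite{ABDN}. You then simply assume that this variant has complexity $n^{k-o(1)}$ in both models. That assumption is the entire content of the theorem, and, as the paper points out, such near-tight bounds are not known in the partite setting.

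\textbf{Step (i) is inconsistent.} With $k$ parts the uniformity is at most $k-1$, so it cannot be ``large relative to $k$''. For fixed $k$, the loss term $(\log r+2)/(r-1)$ in Amano's bound (Theorem~\ref{thm:amano}, which also needs clique size $>$ uniformity $>2$) is a fixed constant, not $o(1)$. For $k=2,3$ it gives nothing at all.

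\textbf{The monotone source was dismissed wrongly.} The $m^{\Omega(\sqrt{\ell})}$ bound is the large-$\ell$ regime. For $\ell=o(\log m)$, Alon--Boppana give the near-tight $m^{\ell(1-o(1))}$ (Theorem~\ref{thm:AB87}). Your sunflower-based alternative is only a sketch, and Remark~\ref{rem:marginal} explains why standard sunflower approximators do not give the needed per-gate error here.

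\textbf{The missing idea is split-and-list over a non-partite (hyper)graph.} Each vector should encode a \emph{group} of $\ell/k$ vertices, with the clique size $\ell$ a parameter you may take large.
\begin{itemize}
\item Take an $r$-uniform hypergraph on $V$ with $|V|=m$ and let $A_i=\{a_{i,S}: S\in\binom{V}{\ell/k}\}$.
\item Then $n=\binom{m}{\ell/k}$ while $d=O(m^r)=n^{O(rk/\ell)}$. This is below $n^{\epsilon}$ once $\ell > rk/\epsilon$; pad up to $n^{\epsilon}$ if needed.
\item The vector $a_{i,S}$ starts with $\binom{k}{2}$ blocks, equal to $\chi_{\overline S}$ on pairs containing $i$ and $0$ elsewhere. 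These force pairwise disjointness.
\item It continues with coordinates indexed by $(e,\vec c)$, where $e\in\binom{V}{r}$ and $\vec c\in\mcI$. The entry is $0$ if $c_i=0$, $1$ if $|e\cap S|\neq c_i$, and $x_e$ otherwise.
\item Then $a_{1,S_1}\cup\cdots\cup a_{k,S_k}=[d]$ iff the $S_i$ are disjoint and their union is an $\ell$-hyperclique.
\end{itemize}
The index $\vec c$ handles hyperedges straddling several groups without any partite structure. So the existing bounds for ordinary graph clique (monotone) and $r$-uniform hyperclique (constant depth) apply directly. A bound of $\binom{m}{\ell}^{1-\varepsilon}$ for hyperclique becomes $n^{k(1-2\varepsilon)}$ for $\kintnd$ (Theorem~\ref{thm:redn}).

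\textbf{Monotonicity must go through the dual.} The reduction is a monotone projection onto $\kintnd^*$, and the monotone bound transfers via Observation~\ref{obs:boolean-dual}. A positive-literal projection onto $\kintnd$ itself, as in your step (iii), cannot compute ``no clique'', which is anti-monotone in the edge variables.
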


        The constant-depth circuit lower bound is a resolution (in a weaker form) of a question due to Paturi stated in the work of Kane and Williams~\cite{kane2019orthogonal}.

         While the above theorem only proves lower bounds for relatively large values of $d$, it is worth noting that the $\OV$ problem is already interesting for fine-grained complexity in this setting~\cite{GIKW,ABDN}.

        \nutan{TODO (for nutan): Make the link between the problems explicit. (See comment by Reviewer 1)}

        Our next result is a complete resolution of the monotone Boolean circuit version of Conjecture~\ref{conj:OVC}.
        
        \begin{restatable}{theorem}{maintheorem}
        \label{thm:maintheorem}
          For every $0 <\epsilon <1$, there exists a constant
                $c$ such that for all $d \geq c\log n$ any monotone
                circuit computing $\Intnd$ requires at least
                $\Omega(n^{2-\epsilon})$ many gates.
        \end{restatable}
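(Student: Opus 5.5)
We prove Theorem~\ref{thm:maintheorem} with Razborov's approximation method, using sunflower-based approximators in the style of Alon--Boppana. Since $\Intnd$ is monotone, it suffices to build two test distributions, a positive one $\Do$ (on which $\Intnd=1$) and a negative one $\Dz$ (on which $\Intnd=0$). We then show that every monotone circuit of size $s=o(n^{2-\epsilon})$ can be replaced gate by gate by a simple approximator, losing only a small fraction of $\Dz\cup\Do$ per gate, while no simple approximator can be correct on most of both distributions.

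\textbf{Input variables and minterms.} Write the variables as $a_{i,j}$ and $b_{i,j}$ for $i\in[n]$, $j\in[d]$. A minterm of $\Intnd$ selects, for every pair $(i,i')\in[n]^2$, a coordinate $j$ with $a_{i,j}=b_{i',j}=1$. The approximators will be monotone DNFs whose terms are small sets of variables. A term is ``local'' if it only involves the vectors $a_i,b_{i'}$ for a small set $S\subseteq A\cup B$ of indices.

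\textbf{Test inputs.} For $\Do$ we use a random partition of $[d]$ into blocks. Each $a_i$ and $b_{i'}$ is the indicator of a random set of size about $d/2$, and the sets are structured (for example, all containing a common planted coordinate $j^*$) so that every pair intersects. Alternatively, we plant a minterm in which each $a_i$ contains a random set $P_i$ and each $b_{i'}$ contains the transversal needed. For $\Dz$ we take independent random vectors of density $p$, chosen so that with $d=c\log n$ a single random pair is orthogonal with probability $(1-p^2)^d\approx n^{-(2-\epsilon/2)}$. We then plant exactly one orthogonal pair $(a_{i_0},b_{i'_0})$ at a uniformly random position, so the input is negative.

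\textbf{Steps.} The argument proceeds in four steps.
\begin{enumerate}
\item Define the approximator class: OR of at most $L$ terms, each term an AND of at most $r$ variables, with $r\approx \epsilon c'\log n$.
\item Handle $\vee$ gates. We take the union of the two DNFs and repeatedly pluck sunflowers of size $p'$, replacing each sunflower by its core. Plucking adds errors only on $\Dz$. Such an error requires all petals to be falsified while the core is satisfied, and this probability is bounded by $(1-q^{|\text{petal}|})^{p'}$, which is tiny for the density regime above.
\item Handle $\wedge$ gates. We take pairwise products, discard terms of size $>r$ (errors only on $\Do$), and pluck again.
\item Union bound over the gates. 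Each gate contributes error at most $n^{-(2-\epsilon)}$ on both distributions, so a circuit of size $s=o(n^{2-\epsilon})$ yields a final approximator that is correct on most of $\Dz$ and $\Do$.
\end{enumerate}

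\textbf{Final contradiction.} A nonzero approximator has a term $T$ with at most $r$ variables. Under $\Do$, $T$ must be satisfied with noticeable probability. Under $\Dz$, however, $T$ touches at most $r$ vectors, and the single planted orthogonal pair lies outside them with probability $1-O(r/n)$. Conditioned on that, $T$ is distributed essentially as under an unconditioned random input, so $T$ is satisfied on $\Dz$ nearly as often as on $\Do$ and cannot separate the two. To make this quantitative, $\Do$ should be ``random vectors conditioned on no orthogonal pair,'' which for this density is close in distribution to independent random vectors restricted to any $r$ coordinates. If instead the approximator is identically $0$, it errs on all of $\Do$. Either way we get the contradiction.

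\textbf{Main obstacle.} The delicate step is the parameter balancing needed to reach exponent $2-\epsilon$ rather than merely $n^{\Omega(1)}$. The sunflower error per pluck is about $(1-q^{r})^{p'}$, the product-truncation error is about $q^{r}$ per discarded term, and the final distinguishing argument loses $O(r/n)$. All three must be made at most $n^{-(2-\epsilon)}$ simultaneously with $d=c\log n$. I would first try the improved sunflower bound (Alweiss--Lovett--Wu--Zhang) with petal counts $p'=\mathrm{poly}\log n$. The crux is to use the fact that errors on $\Dz$ arise only via events near the single planted pair, which has probability $n^{-2+\epsilon/2}$ per pair; this is where the $n^2$ comes from. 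I expect the most work will be in making $\Dz$ and $\Do$ indistinguishable to small terms while keeping the per-gate errors at $n^{-2+\epsilon}$.
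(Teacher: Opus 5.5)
\textbf{There is a genuine gap, and it sits where you place the ``main obstacle''.} With small-term DNF approximators and sunflower plucking, the per-gate guarantees of Steps 2 and 3 cannot hold simultaneously, for any choice of $r,p',L$.

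Let $q_0$ be the constant bit density of your $\Dz$ (away from the planted pair) and $q_1$ that of your $\Do$. A DNF made of $p'$ pairwise disjoint terms of size $r$ is a sunflower with empty core. Plucking turns it into the constant $1$. Under $\Dz$ all its petals are false with probability about $(1-q_0^r)^{p'}$, so Step 2 forces $p'=\Omega(q_0^{-r}\log n)$. In particular $p'>t:=\lceil q_1^{-1}\ln(4r)\rceil$.

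Now take distinct variables $y_{k,c}$ and let $G=\bigwedge_{k=1}^{r}\bigvee_{c=1}^{t}y_{k,c}$. Your rules build $G$ exactly from a circuit of size $O(rt)$. Its expansion has $t^r$ terms of size $r$ and contains no sunflower with $t+1$ petals, so nothing is plucked or truncated, and $L$ must be at least $t^r$. Also $\Pr_{\Do}[G=1]\ge(1-e^{-q_1t})^r\ge 3/4$. Conjoin $G$ with a copy $G'$ on disjoint variables. This yields only terms of size $2r>r$, which Step 3 deletes. So this single $\wedge$ gate has $\Do$-error about $\Pr_{\Do}[G\wedge G']\ge 1/2$, not $n^{-(2-\epsilon)}$. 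Put differently, truncation needs $L^2q_1^{r+1}\ll n^{-2}$, while the sunflower structure forces $L\ge(p'-1)^r$. This is the DNF counterpart of the obstruction the paper records for CNF approximators in Remark~\ref{rem:marginal}.

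Two further problems:
\begin{enumerate}
\item Your first candidate $\Do$, with all vectors containing a planted coordinate, is entirely accepted, and every $0$-input rejected, by the size-$O(nd)$ monotone circuit $\bigvee_{j\in[d]}\bigwedge_{i\in[n]}(a_{i,j}\wedge b_{i,j})$. So that pair of distributions cannot support any bound beyond $O(nd)$.
\item With $(1-p^2)^d\approx n^{-(2-\epsilon/2)}$, a random density-$p$ input already has about $n^{\epsilon/2}$ orthogonal pairs in expectation. It is therefore typically a $0$-input by itself, and ``conditioning on no orthogonal pair'' conditions on a very rare event.
\end{enumerate}

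\textbf{The missing idea is to give up small terms and sunflower closures entirely.} The paper takes $\Dz$ to be the maxterms of $\Intnd$: choose random $i,j$, set $u_i=b$ and $v_j=\overline{b}$ for uniform $b$, and set all other vectors to $1^d$. It takes $\Do=\mu_p$ with $p=\epsilon/4$. Its approximators are conjunctions of at most $m$ \emph{arbitrary} monotone local functions per side. Each function reads a single vector and is false with $\Do$-probability at least $\delta$.
\begin{itemize}
\item At an $\wedge$ gate one just conjoins, replacing the result by $0$ if more than $m$ vectors of one side are involved. By independence across vectors this costs at most $(1-\delta)^m$ on $\Do$.
\item At an $\vee$ gate one distributes and keeps only same-vector disjunctions that are still false with $\Do$-probability at least $\delta$.
\begin{itemize}
\item Dropping $B_i\vee B_{i'}$ with $i\ne i'$ costs nothing on $\Dz$, since one of $u_i,u_{i'}$ is $1^d$.
\item Dropping $B_i\vee C_j$ costs at most $1/n^2$ each, since $(i,j)$ must be the planted pair. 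This is where the $n^2$ actually comes from.
\item Dropping a same-vector disjunction that is false under $\mu_p$ with probability below $\delta$ costs at most $\delta^{\lfloor 1/(2p)\rfloor}/n$. This uses the comparison Lemma~\ref{lem:mupmuq} between $\mu_p$ and $\mu_{1/2}$.
\end{itemize}
\item Finally, any such approximator other than $0$ is false on $\Dz$ with probability at most $2m/n$, while $0$ errs on almost all of $\Do$. With $m=n^{\epsilon}$ and $\delta=n^{-\epsilon/2}$ this gives $s=\Omega(n^{2(1-\epsilon)})$.
\end{itemize}
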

        \nutan{Can we add a small comment about why this completely resolved Conj 1. The connection between Int and OV and why anti-monotone lower bound also holds. This had confused the first set of reviewers.}
        \noindent
        \textbf{Boolean Formulas and Branching Programs.} In our third result, we revisit the Boolean formula and Branching Program lower bounds of Kane and Williams~\cite{kane2019orthogonal}. We improve the parameters in both results. In particular, this leads to a \emph{tight} lower bound (up to constant factors) for Boolean formulas over the full binary basis.\srikanth{We should instead write this for formulas of any constant fan-in. This is how Kane and Williams do it. Neciporuk still works.} More precisely, we show the following.

        \begin{theorem}[Boolean formula and Branching Program lower bound for $\OVnd$ (Informal)]
            \label{thm:Neciporuk}
            Assume $d > c\log n$ for $c > 1.$ Any binary Boolean formula computing $\OV_{n,d}$ has size at least $\Omega(n^2 d)$. Any Boolean branching program for $\OVnd$ has size at least $\Omega(n^2 d/\log (nd)).$
        \end{theorem}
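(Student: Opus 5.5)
The plan is to use Nečiporuk's method for both bounds; the theorem's label already points that way. Recall Nečiporuk: partition the input variables into blocks $Y_1,\ldots,Y_m$. Let $s_i$ be the number of distinct subfunctions of $f$ on $Y_i$ obtained by fixing all variables outside $Y_i$. Then any binary formula for $f$ has size $\Omega(\sum_i \log s_i)$, and any branching program has size $\Omega(\sum_i \log s_i/\log\log s_i)$. Since $\log s_i \le |Y_i| = d$, dividing by $\log\log s_i$ costs at most a $\log(nd)$ factor, which matches the stated $\log(nd)$ loss. The total number of inputs is $2nd$, so the whole task is to choose blocks with $\sum_i \log s_i = \Omega(n^2 d)$.

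\textbf{Blocks.} I will take the blocks to be the $n$ vectors of $A$, so $m=n$ and $|Y_i|=d$. We then need $\log s_i = \Omega(nd)$ for each block. This is consistent because the remaining inputs, $A\setminus\{a_i\}$ together with $B$, have about $2nd$ bits.

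\textbf{Counting subfunctions.} Fix $i$, and fix the other vectors of $A$ to the all-ones vector, which is orthogonal to nothing nonzero. Also keep a few coordinates of $B$ "reserved" so that no vector of $B$ is zero. The subfunction on $a_i$ then becomes
\[
g_B(a_i)=\bigvee_{b\in B}[\,a_i\cap b=\emptyset\,],
\]
which is the indicator of the down-set generated by the complements of the $b\in B$. Distinct antichains give distinct down-sets and hence distinct functions. So $s_i$ is at least the number of $n$-element antichains in $\{0,1\}^{d'}$ with $d'\approx d$.

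Take $B$ to be any $n$-subset of the middle layer $\binom{[d']}{d'/2}$. Its complements are also a subset of a single layer, so distinct choices of $B$ give distinct subfunctions. This yields
\[
s_i \ge \binom{\binom{d'}{d'/2}}{n} \ge \left(\frac{2^{d'}/\sqrt{d'}}{n}\right)^{n}.
\]
Hence $\log s_i \ge n(d' - \tfrac12\log d' - \log n)$.

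\textbf{Main obstacle: the constant factor.} With $d>c\log n$ and $c>1$, the term $\log n$ is at most $d/c$. So $\log s_i \ge n d(1-1/c) - O(n\log d) = \Omega(nd)$. Summing over the $n$ blocks gives $\Omega(n^2d)$ for formulas, and dividing by $\log\log s_i = O(\log(nd))$ gives the branching program bound.

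The step I expect to need care is keeping $d'$ equal to $d - O(1)$ rather than losing, say, half the dimension. The coordinate reservation, and making the all-ones filler for the other $a_j$ consistent, must not shrink the usable dimension by more than a constant. To handle this I will reserve a single coordinate: set it to $1$ in every $b$ and to $0$ in $a_i$, which keeps $b$ nonzero. The padding subtleties near $c$ close to $1$ only affect constants.
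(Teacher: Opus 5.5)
Your proposal is correct and matches the paper's proof essentially step for step: Nechiporuk with one block per vector, the other $a_j$ set to $\vec{1}$, and $B$ ranging over $n$-subsets of the middle layer, giving $\log s_i \ge n(d-\log n-O(\log d))=\Omega(nd)$. The paper uses all $2n$ vectors as blocks, which only changes constants, and its maxterm argument is the same as your antichain/down-set argument.

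One slip to fix: your claim $\log s_i\le |Y_i|=d$ is false, since it contradicts your own $\Omega(nd)$ lower bound. The correct bound is $\log s_i\le 2nd$, obtained by counting assignments to the variables outside $Y_i$; this is what actually gives the $\log(nd)$ loss in the branching-program bound. The reserved coordinate is unnecessary, because middle-layer vectors are already nonzero. In any case you may not fix a bit of $a_i$ itself in Nechiporuk's framework, and your down-set argument does not need to.
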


        We note that this bound is tight \emph{up to constant factors}, as $\OVnd$ has a simple Boolean formula of size $O(n^2 d).$ Kane and Williams~\cite{kane2019orthogonal} prove a  weaker lower bound of $\Omega(n^2/\log d)$ in this regime of parameters. For Branching programs Kane and Williams~\cite{kane2019orthogonal} show a lower bound of $\Omega(n^2/(\log d \log nd))$. Our result improves this bound by a factor of $d \log d$.\\

        \noindent
        \textbf{Circuit constructions.}  Kane and Williams~\cite{kane2019orthogonal} observed that the standard brute-force algorithms for $\kayov$ can be carried out by constant-depth circuits.
        We note that for $\kintnd$, these circuits can also be made monotone. 
        We give two different brute-force constructions, one for small $d$, and another when $d$ 
        is large. In particular, for $d\in o(\log n)$, we show that a monotone
        formula of size $O(2^dn d)$ suffices to compute $\Intnd$. We note that 
        for such small $d$, Williams \cite{Williams24} shows a $\tilde{O}(1.35^dn)$ 
        time randomized algorithm for $\OV$. It is not clear if such a bound 
        can be achieved using monotone circuits.\\

        \subsection{Techniques}
        \label{sec:techniques}

        \noindent
        \textbf{Lower bounds via reductions.} The proof of Theorem~\ref{thm:reductionthm} is based on a simple split-and-list monotone reduction (inspired from~\cite{ABDN}) from graph and hypergraph versions of the clique\footnote{The Clique problem asks if a given input graph or hypergraph (presented as an adjacency matrix or tensor) contains a clique of a given size.} problem to $\kintnd$. The basic observation is that a set $S$ of vertices forms a clique of size $\ell$ in a graph (or hypergraph) if and only if it can be partitioned into $k$ subsets $S_1,\ldots, S_{k}$ of size $\ell/k$ each such that each of them is a clique and each edge contained in the union $S_1\cup \cdots \cup S_k$ is present in the graph (or hypergraph). This can be turned into a reduction from clique on $m$-vertex (hyper-)graphs by creating $k$ lists of size $\binom{m}{\ell/k}$ where each element of the $i$th list corresponds to a set $S_i$ of $\ell/k$ vertices and the corresponding vectors (which have dimension $m^{O(1)}$ each) are orthogonal if and only if $S_1\cup \cdots \cup S_k$ form a clique. 

        To obtain the required lower bound for $\kintnd$ from here, we need a strong lower bound ($m^{\ell\cdot (1-o(1))}$) for the clique problem on graphs or hypergraphs.
        In the monotone setting, such bounds have been known for a long time due to work of  Razborov~\cite{razborov85} (we use a statement from a paper of Alon and Boppana~\cite{AlonBoppana87}). Similar results are known for hypergraph versions of the clique problem via a result of Amano~\cite{amano10}. Along with the reduction, these results imply the desired lower bound. The reduction and its consequences appear in Section~\ref{sec:reduction2}.

        Abboud, Bringmann, Dell and Nederlof~\cite{ABDN} showed how to use the same high-level idea to reduce the problem of deciding $r$-partite clique on $r$-partite $r$-uniform hypergraphs to $\kintnd$.\footnote{They state their results in terms of $\kayov_{n,d}$ but we phrase it in terms of $\kintnd$ as we prefer to stay with monotone problems.} Unfortunately, the requisite circuit lower bounds are not known in the $r$-partite setting and so we need to modify their reduction.  The same work also gives reductions from the general $r$-uniform hypergraph clique problem to the $r$-partite case but it is unclear if such reductions can be implemented by monotone circuits.\\

        \noindent
        \textbf{OV conjecture for monotone circuits.} The above reduction gives us the required lower bound for monotone circuits computing $\Intnd$ when
        the dimension $d$ is 
        larger than the range prescribed in Conjecture
        \ref{conj:OVC}. The reasons for this have to do with
        existing monotone lower bounds for Clique not being tight
        in a fine-grained sense. Unfortunately, even the most optimistic monotone circuit lower bound for Clique (which is not known) would not yield a near quadratic lower bound for all $d = \omega(\log n)$. (This is described in more detail in
        Remark \ref{rem:disadvantages}.) Thus, more ideas are required to prove the $\OV$ conjecture in its strong form for monotone circuits.
        
        To do this, we turn to known techniques for proving strong monotone circuit lower bounds, which have been known since the
        1980s.  Razborov \cite{razborov85} developed a
        method to prove superpolynomial lower bounds on the
        size of monotone circuits computing Clique. This technique is now called the `approximation
        method'. Since Razborov's work, the approximation method has been
        the main approach to proving lower bounds against monotone
        circuits and has been refined and strengthened in several
        works such as \cite{AlonBoppana87, AmanoMaruoka05}. A
        crucial ingredient in the original approximation method is the
        Sunflower Lemma by Erd{\"o}s and Rado
        \cite{erdos1960}. Recently, Rossman \cite{rossman14} defined a
        more relaxed version of Sunflowers, now commonly known as
        ``robust Sunflowers'' and showed average case lower
        bounds for computing Clique. This has also been used to
        strengthen worst case monotone size lower bounds
    \cite{CavalarKumarRossman22}, \cite{cavalarGRSS25}\karteek{is
          this recent breakthrough paper published yet? I just checked, there is still no published version (5th May 2026)}.\srikanth{It will appear in STOC 2026.}


        To achieve better results than the one we get from the above
        reduction (Section~\ref{sec:monotoneOVC}), we apply a modified version of 
        Razborov's approximation method directly to the $\Intnd$ problem. For readers not familiar with
        the standard approximation method, we describe it briefly
        here. The idea is to start with a monotone circuit $C$ of size $s$ computing the target function $f$, and show that $C$ can be ``approximated'' closely using a small depth-two circuit (i.e. a DNF or a CNF)  $\tilde{F}$. Here
        by `approximation', we mean the behaviour of $\tilde{F}$ with
        respect to $C$ on two distributions $\Dz$ and $\Do$ on
        $0$-inputs and $1$-inputs of $f$ respectively. Then we show
        that on the one hand, any small depth-two circuit must have large error, say
        $\delta$, with respect to $f$. On the other hand, the DNF
        $\tilde{F}$ that we constructed has error at most $s\epsilon$
        with respect to $C$. Then we can conclude
        $s\ge \delta/\epsilon$.

        In the typical setting where the approximation method is used,
        the notion of `small' for the depth-two circuits has to do with the number of
        terms/clauses, and their maximum width. Ensuring that the
        approximator  $\tilde{F}$ is small usually involves a
        closure operation that uses the Sunflower Lemma
        or something similar \cite{razborov85, AlonBoppana87,rossman14}
        .
        In our  setting, all these are
        too demanding to be used. At a high level, this is because the marginal probabilities of our chosen distribution $\Dz$ are too low for such techniques to be applicable. See Remark~\ref{rem:marginal}.

        The main difference between the standard approximation method
        and ours is in the structure of the approximator. Instead of a
        DNF/CNF, our approximator is a conjunction of \emph{arbitrary}
        monotone Boolean functions that are each ``local'' to any one vector
        among the $2n$ vectors. i.e., each function takes inputs from
        at most one vector. We define this in Definition
        \ref{defn:approximator}.  This allows us to avoid expensive
        combinatorial closures such as the Sunflower lemma that are
        inapplicable. 
        A crucial property that we
        maintain for our approximator is that each function (in the
        conjunction) errs on $\Do$ with a non-trivial
        probability. Maintaining this property requires a technical
        lemma (Lemma \ref{lem:mupmuq}) that is similar in spirit to
        the Kruskal-Katona Theorem (\cite{kruskal63,katona68}) but in
        a much simpler setting where we work with a product
        distribution.\\

        \noindent
        \textbf{Formula and Branching Program lower bounds.} The source of improvement in the formula and branching program lower bounds is the same, so we describe the formula case. Consider the $\OVnd$ problem with input lists $A,B$. The basic idea of Kane and Williams, which fits into a classical framework for such lower bounds due to Nechiporuk~\cite{nechiporuk1966}, is to show that for any formula $F$ and each input vector $a\in A$, the bits of $a$ label many leaves of $F.$ This can be done by setting all the bits of the other elements of $A$ to $1$ and\footnote{This step is actually not required in the Kane-Williams argument as they consider the version of $\OV$ where $A = B$. The arguments are similar in both cases.} then setting the other elements of $B$ from  a fixed subset $X\subseteq \{0,1\}^d$ of size $n.$ For each subset $X'\subseteq X$, we get a restricted function on the bits of $a$ by using exactly the elements of $X'$ as assignments for the elements of $B.$ This gives $2^{n}$ different `sub-functions' on the bits of $a$ leading (by a simple counting argument) to a lower bound of roughly $n$ on the number of leaves labelled by the bits of $a.$

        To improve this argument, our main observation is that we do not need to fix the subset $X$ beforehand, and doing so unnecessarily reduces the number of sub-functions. Instead, by counting the number of subsets $X$, we obtain a bound of (roughly) $2^{nd}$ instead. This leads to a stronger (and in fact optimal) formula lower bound.

        These results are proved in Section~\ref{sec:Nechiporuk}.
        \subsection{Other related work}

        \noindent
        \textbf{$\kayov_{n,d}$ when $d$ is large.} As mentioned above, the $\kayov_{n,d}$ problem is already interesting in the setting when $d = n^{\Omega(1)}.$ This problem was introduced in work of Gao, Impagliazzo, Kolokolova and Williams~\cite{GIKW} who studied its connections to the fine-grained complexity of a large class of problems defined by first-order logic formulas. This was also further studied in the aforementioned work of Abboud et al.~\cite{ABDN} who made connections to weighted versions of hypergraph clique problems and optimization variants of SAT.\\

        \noindent
        \textbf{Jukna's criterion.}
        Another well-known route for proving monotone circuit lower bounds is via the monotone switching lemma formulated by Jukna~\cite{Jukna-monotone} and Berg and Ulfberg~\cite{BergUlfberg}. Unfortunately, the criterion used for lower bounds in these works does not seem to be applicable to our problem. See Section~\ref{sec:Jukna} for a justification.

	\section{Preliminaries}
	
    For formal definitions of Boolean circuits, formulas,
    and branching programs, we refer the reader to a standard text 
    such as Jukna \cite{juknabook}.

        For any $x,y \in \{0,1\}^d$, we write $x\leq y$ if
	$ \forall i, x_i \leq y_i$.
	\begin{definition}[Monotone function]
          We say that a Boolean function $f$ is monotone if
          $\forall x,y \in \{0,1\}^d$ such that $x \le y$, we have
          $f(x) \le f(y)$.
	\end{definition}

    We often interpret a $d$-dimensional vector $u\in \{0,1\}^d$ as the
	characteristic vector of a subset of $[d]$. For a vector $u_i$, we denote
    the $j$'th bit with $u_{ij}$.

    Recall that the dual of a Boolean function $f:\{0,1\}^n \rightarrow \{0,1\}$ is a Boolean function $f^*:\{0,1\}^n\rightarrow \{0,1\}$ defined by
    \[
    f^*(x_1,\ldots, x_n) := \neg f(\neg x_1,\ldots, \neg x_n).
    \]

    The following is standard.
    \begin{observation}
        \label{obs:boolean-dual}
        The dual $f^*$ of a monotone Boolean function $f$ is also monotone. Furthermore, for any $s$, $f$ has a monotone circuit of size $s$ if and only if $f^*$ has a monotone circuit of size $s.$
    \end{observation}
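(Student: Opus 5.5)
The claim to prove is Observation~\ref{obs:boolean-dual}: if $f$ is monotone then so is $f^*$, and $f$ and $f^*$ have monotone circuits of exactly the same minimum size.

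\textbf{Monotonicity of $f^*$.} Take $x\le y$. Negating coordinatewise gives $\neg y\le \neg x$. Since $f$ is monotone, $f(\neg y)\le f(\neg x)$. Negating the outputs reverses this inequality, so
\[
f^*(x)=\neg f(\neg x)\le \neg f(\neg y)=f^*(y).
\]

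\textbf{Circuit size.} Let $C$ be a monotone circuit of size $s$ computing $f$. Form $C^*$ from $C$ by replacing every $\AND$ gate with an $\OR$ gate and every $\OR$ gate with an $\AND$ gate. Keep the wiring and the input leaves unchanged, and swap any constant $0$ and $1$ leaves. The number of gates is unchanged.

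We show by induction on depth that every gate $g^*$ of $C^*$ computes $g^*(x)=\neg g(\neg x)$, where $g$ is the corresponding gate of $C$.
\begin{itemize}
\item For an input leaf $x_i$ this holds because $\neg\neg x_i = x_i$.
\item For a constant leaf, swapping $0$ and $1$ gives exactly $\neg c$.
\item For an internal gate, De Morgan's laws give $\neg(\bigwedge_j g_j(\neg x)) = \bigvee_j \neg g_j(\neg x)$, which by the induction hypothesis is $\bigvee_j g_j^*(x)$. The case of an $\OR$ gate is symmetric.
\end{itemize}
Applied to the output gate, this shows $C^*$ computes $f^*$ with a monotone circuit of the same size $s$.

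The converse direction follows because $(f^*)^*=f$: negating twice returns the original inputs and output. So the same construction applied to a circuit for $f^*$ yields a circuit for $f$ of the same size. No step here poses a real obstacle; the only care needed is in handling constant gates and using an induction over the circuit DAG, not a tree, which works since gates are evaluated in topological order.
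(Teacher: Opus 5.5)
Your proof is correct. The paper gives no proof of this observation and simply calls it standard; your argument is that standard one: monotonicity by reversing the order twice, and the $\wedge/\vee$ swap with constants swapped, justified by De Morgan and preserving the gate count exactly, with $(f^*)^*=f$ for the converse.
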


    In particular, we will sometimes consider the dual of $\Intnd$. The following is a consequence of the definition of $\Intnd$.

    \begin{observation}
        \label{obs:dual-kint}
        For tuples $A,B\subseteq \{0,1\}^d$ with $|A| = |B| =  n$,
        \begin{align*}
            \Intnd^*(A,B) = 1 \Longleftrightarrow &\ \exists a\in A, \exists b \in B \text{ such that }
             a\cup b = [d].
        \end{align*}
    \end{observation}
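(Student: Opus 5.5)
The claim to prove is Observation~\ref{obs:dual-kint}. I will simply unfold the definition of the dual, $\Intnd^*(A,B) = \neg\,\Intnd(\bar A,\bar B)$, and translate the condition through complementation. Here $\bar A$ is the tuple obtained by negating every bit of every vector in $A$, and likewise for $\bar B$.

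First I fix the bookkeeping. The input to $\Intnd$ is the $2nd$ bits $(a_{ij})$ and $(b_{ij})$. Negating every input bit replaces each vector $a\in A$ by its complement $\bar a = [d]\setminus a$, and similarly for $B$. This is a bijection between positions in the tuples, so quantifying over $a\in A$ is the same as quantifying over $\bar a\in \bar A$.

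Next I apply Definition~\ref{def:Int} with $k=2$. We have $\Intnd(\bar A,\bar B)=1$ iff $\bar a\cap \bar b\neq\emptyset$ for all $a\in A$ and $b\in B$. Negating gives $\Intnd^*(A,B)=1$ iff there exist $a\in A$ and $b\in B$ with $\bar a\cap\bar b=\emptyset$. By De Morgan, $\bar a\cap \bar b = [d]\setminus(a\cup b)$. This set is empty exactly when $a\cup b=[d]$, which is the claimed characterization.

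There is no real obstacle; the only care needed is to apply negation coordinatewise to all $2nd$ input bits, rather than to the tuples as sets.
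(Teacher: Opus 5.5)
Your proof is correct and is the intended argument. The paper gives no separate proof and calls this an immediate consequence of the definition: one applies $f^*(x)=\neg f(\neg x)$ coordinatewise, uses Definition~\ref{def:Int} with $k=2$, and concludes via $\bar a\cap\bar b=[d]\setminus(a\cup b)$.
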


    \begin{definition}[$\ell$-clique]
      The problem $\ell$-clique is
      that of deciding if a given graph
      contains a clique (complete graph) on a subset of at least $\ell$
      vertices.

      When the parameter $\ell$ is not important, we will
      use $\clique$ to refer to the $\ell$-clique problem.
    \end{definition}
    
    We will use $\mu_p$ to denote the distribution where each bit is set 
    independently at random to $1$ with probability $p$.

\nutan{Todo: Add definitions of formula and BP and also L() and BP().}\karteek{Notation L() and BP() have been defined in the Nechiporuk section, 2nd paragraph.}\karteek{Done}

        \section{Lower bounds via reductions from Hyperclique}
    \label{sec:reduction2}
    \srikanth{New section with the stronger reduction. Can replace Section~\ref{sec:reduction} if correct.}

    To state the main result of this section, we start with a definition. 

    For parameters $m,r,\ell$, let $\hypcliquemrl: \{0,1\}^{{{m}\choose{r}}} \rightarrow \{0,1\}$ be a Boolean function that takes as input the characteristic vector of the edge set of an $r$-regular hypergraph on $m$ vertices and outputs a $1$ if and only if it has a hyperclique of size $\ell$.

    \begin{theorem}
    \label{thm:redn}
    The following holds for any small enough constant
    $\varepsilon > 0$ and positive integer constants $r,k, h$. Let $m$ be a parameter and let
    $\ell := \ell(m)$ be a non-decreasing function such that
    $\ell = o(m)$, $\ell$ is divisible by $k$. Suppose any monotone circuit (resp. depth-$h$ (not necessarily monotone) circuit) computing $\hypcliquemrl$
    has size at least
    $\binom{m}{\ell}^{1-\varepsilon}$. Then for
    $n = \binom{m}{\ell/k}$, there exists $d = O(m^r)$ such that any monotone circuit (resp. depth-$h$ circuit) for
    $\kintnd$ on length-$n$ lists of vectors of dimension $d$ has size
    at least $n^{k\cdot (1-2\varepsilon)}.$
    \end{theorem}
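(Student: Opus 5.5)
We prove the contrapositive via a monotone (and constant-depth-preserving) reduction: from any circuit $C$ for $\kintnd$ of size $s$ we build a circuit for $\hypcliquemrl$ of size roughly $s$ plus lower-order overhead. Then $s \geq \binom{m}{\ell}^{1-\varepsilon} - (\text{overhead})$. Since $n^k = \binom{m}{\ell/k}^k$ and $\binom{m}{\ell/k}^k \leq \binom{m}{\ell}^{1+o(1)}$ when $\ell = o(m)$ (multinomial estimate: $\binom{m}{\ell}\cdot \ell!/((\ell/k)!)^k \geq \binom{m}{\ell/k}^k$, with $\ell!/((\ell/k)!)^k \leq k^\ell = \binom{m}{\ell}^{o(1)}$ because $\binom{m}{\ell}\geq (m/\ell)^\ell$ and $m/\ell\to\infty$), we obtain $s \geq n^{k(1-\varepsilon)(1-o(1))} \geq n^{k(1-2\varepsilon)}$ for large $m$.

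\textbf{The reduction.} Index the $k$ lists by $i\in[k]$. Each list $A_i$ has one element $a_i^S$ for every set $S\subseteq[m]$ with $|S|=\ell/k$, so $n=\binom{m}{\ell/k}$. The coordinates are indexed by $[d]=\binom{[m]}{r}\times\{\text{extra tags}\}$, so $d=O(m^r)$. We want a monotone map such that $\kintnd=0$ (some tuple $a_1^{S_1},\dots,a_k^{S_k}$ has empty common intersection) exactly when $S_1\cup\dots\cup S_k$ spans an $\ell$-hyperclique.

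It is cleaner to work with the dual of Observation~\ref{obs:dual-kint}, generalized to $k$ lists: output $1$ iff some $b_1\cup\dots\cup b_k=[d]$, where $b_i$ is the complement of $a_i$. This is monotone in the $b$'s. Set coordinate $e=\{v_1,\dots,v_r\}$ of $b_i^{S}$ to $1$ if $e\not\subseteq S$, and to $x_e$ (the edge variable) if $e\subseteq S$. A coordinate $e$ is then covered by the union iff either $e$ is not contained in some $S_i$, or $x_e=1$. That test is wrong, because it lets every $S_i$ excuse edges that lie outside it while no other $S_j$ contains them.

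The fix is to use $k$ copies (tags) of each edge. In the copy $(e,i)$, every $b_j$ with $j\neq i$ is $0$, and $b_i^S$ at $(e,i)$ is $1$ if $e\not\subseteq S$ and $x_e$ otherwise. Then all $(e,i)$ coordinates are covered iff every edge inside each $S_i$ is present, but this still says nothing about edges crossing between parts. To test a crossing edge $e$ one would need all $r$ of its vertices, spread over several $S_j$'s, which a single list cannot see. This is where the hypergraph condition needs genuine cross terms. For $r\geq 2$ I would try a per-coordinate gadget. For each $r$-set $e$ and each choice of which parts hold its vertices, allocate a coordinate $(e,\sigma)$ and set $b_j$ at $(e,\sigma)$ equal to $1$ if $e\cap S_j$ does not match $\sigma$'s prescription for part $j$. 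For the unique responsible part (say the smallest index used by $\sigma$), set it to $x_e$ when $e\cap S_j$ matches.

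Coverage then says: whenever the parts jointly realize $e$ according to $\sigma$, the edge $x_e$ is present; and if some part mismatches, that part supplies a $1$. Since each $x_e$ appears only positively and the other entries are constants, the map is monotone and depth $1$, which preserves both the monotone and the depth-$h$ cases. The number of coordinates is $m^r\cdot k^r=O(m^r)$ for constant $k,r$.

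\textbf{Wrapping up.} The clique circuit is $\bigvee$ over nothing extra: hyperclique exists iff some tuple $(S_1,\dots,S_k)$ of $\ell/k$-sets has full coverage, up to the subtlety that the $S_i$ must be disjoint. I handle this either by adding, for each vertex $v$ and each pair $i<j$, a coordinate that forces disjointness (costing $O(mk^2)$ more coordinates), or by noting that a non-disjoint union is a smaller clique, which extends to an $\ell$-clique only if one exists anyway. I need to be careful here, because a smaller clique does not imply an $\ell$-clique. So I would use the disjointness coordinates: at coordinate $(v,\{i,j\})$, set $b_i^S$ to $1$ iff $v\notin S$ and $b_j^S$ to $1$ iff $v\notin S$. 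The size of the resulting circuit is $s+O(nkd)$ wires of constants and variables, which is negligible. The main obstacle I expect is getting the crossing-edge gadget exactly right, with each $r$-set checked precisely when it lies in the union. That correctness check is where I would spend the effort.
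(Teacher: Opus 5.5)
Your reduction matches the paper's in all essentials. You pass to the dual form ``some $b_1\cup\cdots\cup b_k=[d]$'', index each list by the $(\ell/k)$-subsets of $[m]$, and use the same $\binom{k}{2}m$ pairwise-disjointness coordinates. You also add $O(m^r)$ edge-pattern coordinates that are a monotone projection of $x$. Your gadget (an assignment $\sigma\colon e\to[k]$, testing $e\cap S_j=\sigma^{-1}(j)$ for every $j$, with one responsible part receiving $x_e$) is a finer version of the paper's count vectors $\vec c\in\mathcal{I}$.

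The correctness check you left open does go through, provided two things you should state hold: the other matching parts get $0$ (or $x_e$), and lists outside $\{i,j\}$ get $0$ at the disjointness coordinates. Then $(e,\sigma)$ is uncovered iff $e\cap S_j=\sigma^{-1}(j)$ for all $j$ and $x_e=0$. For pairwise disjoint $S_j$, such a $\sigma$ exists (uniquely) iff $e\subseteq S_1\cup\cdots\cup S_k$.

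You should also say that a size-$s$ monotone (resp.\ depth-$h$) circuit for $\kintnd$ gives one of the same size and depth for its dual, by swapping $\wedge$ and $\vee$ (Observation~\ref{obs:boolean-dual}). Without this, your projection, which targets the dual's input form, does not connect to the circuit $C$ you start from.

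The concrete error is in the counting. The inequality $\binom{m}{\ell}\cdot\ell!/((\ell/k)!)^k\ge\binom{m}{\ell/k}^k$ is false. The left side counts ordered $k$-tuples of pairwise disjoint $(\ell/k)$-subsets, and the right side counts all ordered $k$-tuples, so the inequality holds in the opposite direction. For $m=4$, $\ell=2$, $k=2$ it reads $12\ge 16$. Your derivation of $\binom{m}{\ell/k}^k\le\binom{m}{\ell}^{1+o(1)}$ therefore fails as written.

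The bound itself is true, via the estimate the paper uses:
$\binom{m}{\ell/k}^k\le(ekm/\ell)^{\ell}=(ek)^{\ell}(m/\ell)^{\ell}\le(ek)^{\ell}\binom{m}{\ell}$.
Moreover $(ek)^{\ell}=\binom{m}{\ell}^{o(1)}$ by your own $m/\ell\to\infty$ argument.

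Finally, calling the $O(nm^r)$ overhead negligible requires $nm^r\ll n^{k(1-2\varepsilon)}$. That uses $k\ge 2$, small $\varepsilon$, and $\ell$ growing (large relative to $r$). The paper makes the same implicit assumption, but you should state it.
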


    Before we prove the theorem above, we use it to derive the monotone lower
    bound and constant-depth lower bound for computing $\kintnd.$

    \paragraph*{Monotone circuit lower bound.}
    For the monotone lower bound, we will use the following lower
    bound stated by Alon and Boppana~\cite{AlonBoppana87} following
    the work of Razborov~\cite{razborov85}.

    \begin{theorem}[\cite{AlonBoppana87, razborov85}]
        \label{thm:AB87}
        Any monotone circuit solving $\ell$-Clique on graphs with
        $m$ vertices has size at least
        $\frac{m^\ell}{2^{O(\ell^2)}\cdot (\log m)^\ell}.$ In
        particular, for any large enough $\ell = \ell(m) = o(\log m),$ the
        lower bound is $m^{\ell\cdot (1-o(1))}.$
    \end{theorem}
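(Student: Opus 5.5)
The first bound is exactly the quantitative form of Razborov's theorem as worked out by Alon and Boppana~\cite{AlonBoppana87}. I would not reprove it; I would cite it and only verify the ``in particular'' clause, which is the form used later in this section.

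For completeness, this is the shape of the argument behind the cited bound. One runs Razborov's approximation method with two test distributions. Positive inputs are the graphs consisting of a single planted $\ell$-clique. Negative inputs are the complete $(\ell-1)$-partite graphs induced by a random colouring of the $m$ vertices. Each gate of a monotone circuit of size $s$ is replaced by an approximator: a DNF over ``clique indicators'' of sets of at most $q$ vertices, with at most $L$ terms. When the number of terms exceeds $L$, the Erd\H{o}s--Rado Sunflower Lemma~\cite{erdos1960} is used to pluck sunflowers and replace each by its core. Each $\OR$ or $\AND$ step introduces only a small number of errors on either distribution. Finally, any such approximator errs on a large fraction of one of the two distributions. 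Optimising $q$ and $L$ yields the stated bound $m^\ell/(2^{O(\ell^2)}(\log m)^\ell)$; the $(\log m)^\ell$ factor comes from the sunflower-size parameter.

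For the ``in particular'' clause I would rewrite the denominator as a power of $m$. Let $\ell=\ell(m)=o(\log m)$.
\begin{itemize}
\item First, $2^{O(\ell^2)} = m^{O(\ell^2)/\log m} = m^{\ell\cdot O(\ell/\log m)} = m^{o(\ell)}$, since $\ell/\log m\to 0$.
\item Second, $(\log m)^\ell = m^{\ell\cdot \log\log m/\log m} = m^{o(\ell)}$.
\end{itemize}
Dividing $m^\ell$ by $m^{o(\ell)}$ gives $m^{\ell(1-o(1))}$.

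The hypothesis ``$\ell$ large enough'' only ensures that the hidden constants and any lower-order terms in the Alon--Boppana bound, which are valid for $\ell$ at least some absolute constant, are absorbed into the $o(1)$. The only point needing care is that these $o(1)$ terms are uniform as $m\to\infty$ along the given function $\ell(m)$, and this holds because both exponents above tend to zero.

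Finally, I would note how this is combined with Theorem~\ref{thm:redn} for $r=2$. Since $\binom{m}{\ell}\le m^\ell$, the bound $m^{\ell(1-o(1))}\ge\binom{m}{\ell}^{1-\varepsilon}$ holds for $m$ large, which is exactly the hypothesis that theorem requires.
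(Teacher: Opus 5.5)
Your proposal is correct and matches the paper, which also takes the quantitative bound directly from Alon--Boppana and Razborov without reproving it. Your derivation of the ``in particular'' clause is the routine calculation the paper leaves implicit: you write $2^{O(\ell^2)} = m^{\ell\cdot O(\ell/\log m)}$ and $(\log m)^\ell = m^{\ell\log\log m/\log m}$, and observe both are $m^{o(\ell)}$ when $\ell=o(\log m)$.
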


    \begin{corollary}\label{cor:reduction}
      Fix any constant $\epsilon > 0$. Let $n$ be a growing parameter
      and $\epsilon > 0$ any constant. There is a function
      $d(n) = \exp((\log n)^{1/2 + o(1)}) = n^{o(1)}$ such that for infinitely
      many $n,$ any monotone circuit computing $\kintnd$ must have
      size at least $n^{k\cdot (1-\epsilon)}.$
    \end{corollary}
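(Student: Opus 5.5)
We combine Theorem~\ref{thm:redn} in the graph case $r=2$ with the monotone lower bound of Theorem~\ref{thm:AB87}. First we choose $\ell = \ell(m)$ so that Alon--Boppana gives a clique lower bound of the form $\binom{m}{\ell}^{1-\varepsilon'}$, and so that $\ell$ is divisible by $k$ and non-decreasing. Theorem~\ref{thm:AB87} gives size at least $m^\ell/(2^{O(\ell^2)}(\log m)^\ell)$. Since $\binom{m}{\ell}\le m^\ell$, it suffices to have
\[
2^{O(\ell^2)}(\log m)^\ell \le m^{\varepsilon' \ell},
\]
that is, $O(\ell) + \log\log m \le \varepsilon'\log m$. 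This holds for all large $m$ as long as $\ell = o(\log m)$.

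To make $n$ and $d$ fit the claimed form, we take $\ell$ close to the top of this range, for instance $\ell = k\lfloor \log m/(k\log\log m)\rfloor$. With $\varepsilon' = \epsilon/2$, Theorem~\ref{thm:redn} applies to the monotone case with $r=2$: for $n=\binom{m}{\ell/k}$ and $d=O(m^2)$, every monotone circuit for $\kintnd$ has size at least $n^{k(1-\epsilon)}$. The remaining conditions hold since $\ell = o(m)$ and we can restrict to those $m$ where the divisibility constraint is met.

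The remaining step, and the only real computation, is to express $d$ in terms of $n$. We have
\[
\log n = \log\binom{m}{\ell/k} = \Theta\!\left(\frac{\ell}{k}\log\frac{m}{\ell}\right) = \Theta(\ell\log m) = \frac{(\log m)^2}{\log\log m}\cdot \Theta(1/k)\cdot k.
\]
Hence $\log m = (\log n)^{1/2+o(1)}$, using that $\log\log m = \Theta(\log\log n)$ is absorbed into the $o(1)$ in the exponent. Therefore $d = O(m^2) = \exp(O(\log m)) = \exp((\log n)^{1/2+o(1)})$, which is $n^{o(1)}$. We define $d(n)$ by this relation on the infinitely many $n$ of the form $\binom{m}{\ell(m)/k}$, and extend arbitrarily elsewhere.

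The main thing to get right is this balance. $\ell$ must be $o(\log m)$ for the Alon--Boppana bound to be $m^{\ell(1-o(1))}$, yet large enough that $d$ stays $n^{o(1)}$. With $\ell\approx\log m/\log\log m$, both $\log d$ and $\sqrt{\log n}$ are $(\log m)^{1+o(1)}$, so both requirements hold at once.
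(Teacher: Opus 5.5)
Your proposal is correct and follows the paper's proof essentially verbatim. You take $r=2$ and $\ell \approx \log m/\log\log m$ divisible by $k$, use Theorem~\ref{thm:AB87} to get a $\binom{m}{\ell}^{1-\epsilon/2}$ clique lower bound, apply Theorem~\ref{thm:redn} with $n=\binom{m}{\ell/k}$ and $d=O(m^2)$, and note $\log n=(\log m)^{2-o(1)}$, which gives $d=\exp((\log n)^{1/2+o(1)})$; your write-up is slightly more explicit than the paper's, since you check $2^{O(\ell^2)}(\log m)^{\ell}\le m^{\epsilon\ell/2}$ and explain why the $\log\log m$ factor is absorbed into the $o(1)$.
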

    \nutan{Do we mean $\kintnd$ here?}\karteek{Yes. Fixed!}
    \begin{proof}[Proof of Corollary~\ref{cor:reduction}]
      The proof is an immediate consequence of Theorem \ref{thm:redn}
      along with Theorem~\ref{thm:AB87} above. More precisely, we fix
      $\ell$ to be a growing parameter of $m$ such that
      $\ell = \log m/\log\log m$ and $\ell$ is divisible by $k$. We will apply
      Theorem~\ref{thm:redn} with $\delta = \epsilon/2$. \nutan{The difference between vareps and eps is too small. Maybe use a visibly different parameter?}
        
      Note that Theorem~\ref{thm:AB87} yields a monotone circuit lower
      bound for clique of size
      $m^{\ell\cdot (1-o(1))} \geq \binom{m}{\ell}^{1-\delta}.$
      We set $n = \binom{m}{\ell/k} = \exp((\log m)^{2-o(1)})$ and 
      $d\leq m^2 = \exp(O(\log m)) = \exp((\log n)^{1/2 + o(1)}).$ \nutan{I don't see why $\binom{m}{\ell/k} = \exp((\log m)^{k-o(1)})$. Can you justify?}\karteek{Corrected exponent to 2}
      For
      this setting of parameters, Theorem~\ref{thm:redn} implies a
      lower bound of $n^{k(1-2\delta)} = n^{k(1-\epsilon)}.$
      \nutan{The last equality will seem very strange if a reader misses the diff between vareps and eps. :)}
    \end{proof}

    
    \paragraph*{Constant-depth lower bound.} For the constant-depth lower bound, we use the following theorem by Amano~\cite{amano10}. 
    We will need the following definition to state the theorem.

    \begin{theorem}
        \label{thm:amano}
        For every constant $\ell > r > 2$ every depth-$h$ circuit computing $\hypcliquemrl$ must have size at least  
        $\Omega(m^{\ell \cdot (1-\frac{\log r + 2}{r-1})}).$ In particular, for constant $r$, the lower bound is $m^{\ell \cdot (1-\delta)}$ for a constant $\delta$.

    \end{theorem}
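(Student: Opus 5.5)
Since Theorem~\ref{thm:amano} is quoted from Amano~\cite{amano10}, the plan below reconstructs the argument in the style of Rossman's average-case clique lower bound, adapted to $r$-uniform hypergraphs. I do not expect to re-derive the exact constant $\frac{\log r+2}{r-1}$. The strategy is to prove an \emph{average-case} bound, which implies the worst-case one. I would fix two distributions on $r$-uniform hypergraphs on $m$ vertices:
\begin{itemize}
\item $\Dz$ is the Erd\H{o}s--R\'enyi hypergraph $\mu_p$ on the $\binom{m}{r}$ edge variables. Here $p = m^{-\alpha}$ with $\alpha \approx r/\binom{\ell-1}{r-1}$, chosen just below the threshold at which $\ell$-hypercliques appear, so that a $\Dz$ sample contains no $\ell$-hyperclique with constant probability.
\item $\Do$ is a $\Dz$ sample with a hyperclique planted on a uniformly random $\ell$-subset of $[m]$.
\end{itemize}
It then suffices to show that any depth-$h$ circuit of size $s \le m^{\ell(1-\delta)}$ has distinguishing advantage $o(1)$ between $\Dz$ and $\Do$.

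The first step removes depth. I would apply H\aa stad's switching lemma iteratively, $h-2$ times, under random restrictions that are compatible with $\mu_p$: each edge variable is left alive with a suitable probability and otherwise fixed according to $\mu_p$, with the $p$-biased version of the lemma. After this, with high probability every gate is computed by a DNF or CNF of width $w = O(\log s / \log(1/p)) = O(\ell)$. Because the restrictions preserve the product structure, the restricted $\Dz$ and $\Do$ remain of the same form on the surviving variables, with a planted clique on the surviving vertex set. This reduces the problem to showing that a width-$w$ DNF (or CNF) with few terms cannot distinguish the two distributions.

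The heart of the proof is a counting argument on individual terms. A term is a set $E$ of at most $w$ hyperedges spanning a vertex set $V(E)$ of size $t \le rw$. If $E$ is accepted by the planted part of $\Do$, then $V(E)$ must lie in the planted $\ell$-set, which happens with probability about $(\ell/m)^{t}$. Meanwhile $E$ is satisfied under $\Dz$ with probability $p^{|E|}$. A term helps distinguish only if $p^{|E|}$ is small while $(\ell/m)^{t}$ is not negligible. Because $E$ is an $r$-uniform hypergraph on $t$ vertices, $|E| \le \binom{t}{r}$, so for $t < \ell$ the gain $p^{-|E|}$ cannot compensate for the loss $m^{t}$ in the planted probability. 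Summing over at most $s$ terms, the total advantage is at most $s\cdot m^{-\ell(1-\delta)}$.

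I expect this density-versus-vertex-count tradeoff to be the main obstacle. It is where the quantitative exponent $1-\frac{\log r+2}{r-1}$ arises: for hypergraphs, the sparsest vertex set that still forces $\Dz$-rarity is much closer to the full clique than for graphs, which is why the bound approaches $m^{\ell}$ as $r$ grows. To obtain it, I would first try Rossman-style ``pathset'' or sunflower-free counting, bounding the number of terms sharing a kernel by $p$-weighted inclusion–exclusion. I would then optimize the choice of $\alpha$ against the switching-lemma width, giving $s \ge m^{\ell(1-\delta)}$ with $\delta = O(\log r / r)$.
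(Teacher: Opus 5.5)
The paper does not prove this statement; it imports it from Amano~\cite{amano10}, who adapts Rossman's average-case $AC^0$ lower bound for $k$-clique. Your outer framework is the right one: threshold random hypergraph versus planted hyperclique, the $p$-biased switching lemma, then an analysis of bounded-width DNFs. The gap is in the step you call the heart of the proof. It is not just missing its constant: as stated it is false, and it omits the idea that actually carries the bound.

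\textbf{The per-term argument cannot work.} A term $E$ can be satisfied in $\Do$ by a mixture of planted and random hyperedges. So acceptance ``because of the plant'' only forces the vertices of $E\cap K_A$, not all of $V(E)$, into the planted set. Consequently no per-term bound can be as small as $m^{-\ell(1-\delta)}$. The one-hyperedge term $x_e$ already has advantage $(1-p)\Pr[e\subseteq A]\approx(\ell/m)^r$. Hence any argument that bounds each term's contribution separately and then sums over $s$ terms yields at best a bound of order $(m/\ell)^r$ on $s$.

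\textbf{The missing idea.} You must use that the DNF as a whole rejects $\Dz$ with constant probability. Suppose many terms extend the same labelled sub-pattern $H\subseteq K_A$ by spread-out remainders. Then the random part completes one of them with high probability, so the DNF already accepts the unplanted sample whenever that copy of $H$ is present. Every proper sub-hyperclique on $t<\ell$ vertices occurs about $m^{t}p^{\binom{t}{r}}\gg 1$ times in $\Dz$. So a DNF that rejects $\Dz$ can afford this behaviour only for few $H$, and this is what forces many terms. Making this rigorous needs a sunflower/closure argument on the family of terms, together with a lower-tail estimate for the event that no petal appears in the random part (e.g.\ Janson's inequality). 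This is the core of Rossman's argument and of Amano's adaptation, and your proposal does not supply it.

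\textbf{Where the exponent comes from.} It is not a per-term density-versus-vertex-count tradeoff. Rossman's $n^{k/4}$ is essentially the expected number of copies of $K_{k/2}$ in $G(n,n^{-2/(k-1)})$. The hypergraph analogue, for your $p=m^{-r/\binom{\ell-1}{r-1}}$, is $\max_t m^{t}p^{\binom{t}{r}}=\max_t m^{t(1-\binom{t-1}{r-1}/\binom{\ell-1}{r-1})}$. Use $\binom{t-1}{r-1}/\binom{\ell-1}{r-1}\le(t/\ell)^{r-1}$ and take $t=\floor{\ell r^{-1/(r-1)}}$. One checks that this exponent is at least $\ell\bigl(1-\frac{\log r+2}{r-1}\bigr)$, so it dominates the bound in the statement.

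\textbf{The depth-reduction step is also unjustified as written.} Restrictions act on hyperedge variables, so there is no ``surviving vertex set''. A typical restriction fixes most hyperedges of $K_A$, mostly to $0$, while $\Do$ forces them to $1$. Hence the restricted $\Do$ is not a planted version of the restricted $\Dz$. You need a separate argument decoupling the $\binom{\ell}{r}$ planted coordinates from the restriction before the DNF analysis can be applied.
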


    Now, using Theorem~\ref{thm:redn} and Theorem~\ref{thm:amano}, we get the folowing corollary for depth-$h$ circuits which proves  Theorem~\ref{thm:maintheorem}. 
    \begin{corollary}\label{cor:reductiion-depth-h}
      Fix any constant $\epsilon > 0$. Let $n$ be a growing parameter
      and $\epsilon > 0$ any constant. There is a function
      $d(n) = \exp((\log n)^{1/2 + o(1)}) = n^{o(1)}$ such that for infinitely
      many $n,$ any depth-$h$ circuit computing $\kintnd$ must have
      size at least $n^{k\cdot (1-\epsilon)}.$
    \end{corollary}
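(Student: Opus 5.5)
The plan is to instantiate Theorem~\ref{thm:redn} in its depth-$h$ version, using Theorem~\ref{thm:amano} for the hypothesis on $\hypcliquemrl$. This mirrors the proof of Corollary~\ref{cor:reduction}, with Amano's bound replacing Alon--Boppana. Given $\epsilon$, set $\delta := \epsilon/2$; this plays the role of the constant $\varepsilon$ of Theorem~\ref{thm:redn}, so that the conclusion $n^{k(1-2\delta)}$ equals $n^{k(1-\epsilon)}$.

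The first step is to choose the uniformity $r$. Take $r$ a large enough constant that $\frac{\log r+2}{r-1}\le \delta/2$. Theorem~\ref{thm:amano} then gives, for every constant $\ell>r$, a depth-$h$ lower bound of $\Omega(m^{\ell(1-\delta/2)})$ for $\hypcliquemrl$. Since $\binom{m}{\ell}\le m^{\ell}$, for large $m$ this is at least $\binom{m}{\ell}^{1-\delta}$, where the slack $\delta/2$ absorbs the hidden constant. Next take $\ell>r$ a constant divisible by $k$. Then $\ell$ is trivially non-decreasing and $\ell=o(m)$, so Theorem~\ref{thm:redn} applies. It yields, for $n=\binom{m}{\ell/k}$ and some $d=O(m^r)$, that every depth-$h$ circuit for $\kintnd$ has size at least $n^{k(1-2\delta)}=n^{k(1-\epsilon)}$. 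This holds for every $n$ of the form $\binom{m}{\ell/k}$, which gives infinitely many $n$.

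It remains to relate $d$ to $n$. Since $n=\Theta(m^{\ell/k})$, we get $d=O(m^r)=n^{O(rk/\ell)}$. By taking $\ell$ large compared to $rk$, we can make $d=n^{\gamma}$ for any desired constant $\gamma>0$, which is exactly what is needed for Theorem~\ref{thm:reductionthm}.

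I expect the main obstacle to be the dimension bound as written, $d=\exp((\log n)^{1/2+o(1)})$. Reaching it would force $\ell$ to grow with $m$, roughly $\ell\approx\log m/\log\log m$ as in Corollary~\ref{cor:reduction}. But Theorem~\ref{thm:amano} is stated only for constant $\ell$, and the constant hidden in its $\Omega(\cdot)$ may depend on $\ell$. So I would first check whether Amano's argument gives a bound of the form $m^{\ell(1-\delta)}/2^{O(\ell^2)}$ uniformly in $\ell$. If it does, we can let $\ell$ grow slowly (with $r$ fixed) and recover the sub-polynomial $d$, since then $d=m^{O(1)}$ while $\log n\approx(\ell/k)\log m$. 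If it does not, the argument above still proves the statement with $d=n^{\gamma}$ for arbitrarily small constant $\gamma$, and this is the form actually used in Theorem~\ref{thm:reductionthm}.
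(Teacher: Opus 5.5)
Your route is the same as the paper's: it applies Theorem~\ref{thm:redn} with Amano's bound, taking $r$ a large constant depending on $\delta=\epsilon/2$ and $\ell$ a constant divisible by $k$ with $\ell\ge kr/(2\delta)$. Note that $\ell$ must be large relative to $rk$ in any case, not only to make $d$ small: with merely $\ell>r$, the overhead $\binom{m}{\ell/k}m^r$ in the proof of Theorem~\ref{thm:redn} can exceed $\binom{m}{\ell}^{1-\delta}$. Your later choice of $\ell$ large compared to $rk$ handles this.

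Your caveat about the dimension is correct and applies equally to the paper. With $\ell$ constant one only gets $d=O(m^r)=n^{O(rk/\ell)}=n^{O(\delta)}$, while the paper's proof simply asserts $d=n^{o(1)}$. So the bound $d=\exp((\log n)^{1/2+o(1)})$ in the statement, carried over from Corollary~\ref{cor:reduction}, is established by neither argument. What both actually prove is the version with $d=n^{\gamma}$ for an arbitrarily small constant $\gamma$, and that is what Theorem~\ref{thm:reductionthm} needs.
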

    \begin{proof}[Proof of Corollary~\ref{cor:reductiion-depth-h}]
        For any fixed $k$, and for $r$ at least $\frac{1}{\delta} \cdot \log (1/\delta)$, we choose $\ell$ such that $\ell$ is divisible by $k$ and $\ell$ is at least $kr/2\delta$. Note that, for this setting and for $n = \binom{m}{\ell/k}$ we have $d = n^{o(1)}$. Now, we can invoke Theorem~\ref{thm:redn}. Note that, we have a lower bound $m^{\ell(1-\delta)}$ from~\cite{amano10}, which is stronger than $\binom{m}{\ell}^{(1-\delta)}$, which we need to apply Theorem~\ref{thm:redn}. We get a lower bound of $n^{k (1-2\delta)}$. This completes the proof. 
    \end{proof}

  \begin{proof}[Proof of Theorem~\ref{thm:redn}] 

    The proof is the same for both monotone circuits and depth-$h$ circuits. 

    Let $G=(V,E)$ be an $r$-uniform hypergraph on $m$ vertices. Let
    $x$ be the bit-vector that represents this hypergraph when given as input to a circuit. That is, the
    entries of $x$ are labelled by sets $e\in \binom{V}{r}$ and
    $x_e = 1$ if and only if the edge $e$ is present in the hypergraph
    $G$. Informally, $x$ is the \emph{adjacency tensor} of $G.$
    
        We first describe how to construct, for any
        $S \subseteq V$ and $i\in [r]$, vectors $u_{i,S}$ of length $d'$ (defined below) such that
        for all \emph{pairwise disjoint} $S_1,\ldots, S_k\subseteq V$, we have:
        \begin{align*}
          u_{1,S_1}\cup\cdots\cup u_{k,S_k} = [d'] \iff S_1\cup\cdots\cup S_k \text{ forms an $r$-hyperclique in }G
        \end{align*}
        Define\karteek{To be proper pedantic, $\mcI$ should really be $\mcI_r$. But
          we never use it in a different context, so i am leaving this
          without subscript.}
          $\mcI = \{(c_1,\ldots, c_k)\mid \forall i~ c_i\in
          \mathbb{N}\cup\{0\}\text{ and } \sum_ic_i = r \}$.  We
          define $u_{i,S}$ with dimension
          $d' = \binom{m}{r}\cdot |\mcI|$ as follows.  For
          $e = \{v_1,\ldots, v_r\}\subseteq V$ and
          $(c_1,\ldots, c_k)\in \mcI$,
        \begin{align*}
          u_{i,S}[e,(c_1,\ldots, c_k)] = \begin{cases}
            0 & \text{if } c_i = 0\\
            1 & \text{if } |e\cap S| \neq c_i\\
            x_e & \text{otherwise}             
            \end{cases} 
        \end{align*}
        \textbf{Correctness.} Let $S_1,\ldots, S_k\subseteq V$ be
        pairwise disjoint, and
        $e = \{v_1,\ldots, v_r\}\subseteq S_1\cup\cdots\cup S_k$ be
        chosen arbitrarily. Define
        $\vec{\alpha}\in (\mathbb{N}\cup\{0\})^k$ as
        $\alpha_i = |e\cap S_i|$. Since the sets $S_i$ are pairwise
        disjoint, $\sum_i \alpha_i = |e| = r$.

        To see the forward direction, suppose $e\not\in E$. i.e.,
        $x_e=0$. Then for each $i$ such that $\alpha_i = 0$, we have
        $u_{i,S_i}[e,\vec{\alpha}]=0$. For those $i$ where
        $\alpha_i\neq 0$, we know $\alpha_i = |e\cap S|$ by
        definition, and thus $u_{i,S_i}[e,\vec{\alpha}]= x_e =
        0$. Hence $(u_{1,S_1}\cup\cdots\cup u_{k,S_k})[i] = 0$.

        For the reverse direction, suppose $S_1\cup\cdots\cup S_k$
        indeed forms an $r$-hyperclique in $G$. Let
        $e= \{v_1,\ldots, v_r\}\subseteq S_1\cup\cdots\cup S_r$, and
        $\vec{c}:=(c_1,\ldots, c_r)\in \mcI$ be chosen arbitrarily. If
        $c_i = \alpha_i$ for all $i$, then for any choice of
        $j\in [k]$ such that $\alpha_j\neq 0$, we have
        $u_{j,S_j}[e,\vec{c}] = x_e = 1$.  Else
        $\vec{c}\neq \vec{\alpha}$. Since
        $\sum_i \alpha_i = \sum_i c_i = r$, there must exist
        $j\in [k]$ such that $c_j\neq \alpha_j$ (equivalently
        $c_j\neq |e\cap S_j|$) and $c_j\neq 0$.  Then
        by definition, we have $u_{j,S_j}[e,\vec{c}] = 1$. 

        For $e\not\subseteq S_1\cup \cdots S_r$ and any choice of $\vec{c}\in \mathcal{I}$, there must be an $i$ such that $c_i \neq 0$ and $|e\cap S_i| \neq c_i$ (since otherwise, we would have $|e\cap (S_1\cup \cdots \cup S_r)| = r$, a contradiction). This means that $u_{i,S_i}[e,\vec{c}]=1$ as in the previous paragraph.

       \paragraph*{The reduction.}
       We construct a monotone circuit $R$ that takes as input an
       $r$-uniform hypergraph $G = (V,E)$ with $|V|=m$ encoded as
       variables $\{x_{S} \mid S\subseteq V\text{ and } |S|=r\}$ and
       produces as output $k$ lists $A_1,\ldots, A_k$ of
       $n = \binom{m}{\ell/k}$ Boolean vectors each of dimension
       $d = \binom{k}{2}\cdot m + d' $ such that
        \[
        \kintnd^*(A_1,\ldots, A_k) = 1 \iff \text{$G$ has an $r$-hyperclique of size $\ell$.}
        \]
        The idea is to use the fact that a set $S\subseteq V$ forms an
        $r$-hyperclique of size $\ell$ in $G$ if and only if $S$ can
        be partitioned into $S_{1},\ldots, S_{k}$ each of size
        $\ell/k$ such that $S_1\cup\cdots\cup S_k$ is an
        $r$-hyperclique in $G$.

        The vectors in $A_i$ will be potential candidates for the set
        $S_i$. We will think of each such vector as having two
        parts. The first part will contain $\binom{k}{2}$ blocks (each
        of $m$ bits) that will be used to verify that
        $S_{1},\ldots, S_{k}$ are pairwise disjoint. The second part
        uses the $u_{i,S}$ vectors constructed earlier to check that
        $S_1\cup\cdots\cup S_{k}$ forms an $r$-hyperclique.

        We define the vectors in $A_1,\ldots, A_k$ as follows. The vector
        $a_{i,S}$ is the vector $u_{i,S}$ prefixed by 
        $\binom{k}{2}$ blocks each of $m$ bits. These are indexed by
        $\{j_1,j_2\}\in \binom{[k]}{2}$ and defined as:
        \begin{align*}
          a_{i,S}[\{j_1,j_j\}] =
          \begin{cases}
            \chi_{\overline{S}}& \text{ if } i\in \{j_1,j_2\}\\
            0 &\text{ otherwise.}                                 
          \end{cases}
        \end{align*}
        where $\chi_{\overline{S}}$ is the characteristic vector of
        $\overline{S}$.  As mentioned earlier, the above
        $\binom{k}{2}m$ bits are followed by the vector $u_{i,S}$.
        This completes the definition of $a_{i,S}$.
        

        \textbf{Correctness.} We show that $G$ contains an
        $r$-hyperclique of size $\ell$ if and only if
        $(A_1,\ldots, A_k)$ is a $1$-instance of $\kintnd^*$.

        \noindent
        ``$\implies$'': Suppose $G$ contains an $r$-hyperclique of size $\ell$
        on $T\subseteq V$.  Let $T_1,\ldots, T_k\subseteq T$ be an
        arbitrary partition of $T$ into $k$ parts each of size
        $\ell/k$.  We claim that
        $a_{1,T_1}\cup \cdots \cup a_{k,T_k} = [d]$. 

        To see that the bits in the first $\binom{k}{2}$ blocks are
        all $1$s, choose any index $i$ in this range. The index $i$ is
        in a block $\{j_1,j_2\}\in \binom{k}{2}$. Then exactly two of
        the vectors among $a_{1,T_1},\ldots, a_{k,T_k}$ are non-zero
        in the block $\{j_1,j_2\}$, namely $a_{j_1,T_{j_1}}$ and
        $a_{j_2,T_{j_2}}$. Recall that the $T_i$ are pairwise
        disjoint.  In particular, $T_{j_1}$ and $T_{j_2}$ are
        disjoint. Then by definition
        $a_{j_1,T_{j_1}}[\{j_1,j_2\}] = \chi_{\overline{T_{j_1}}}$ and
        $a_{j_2,T_{j_2}}[\{j_1,j_2\}]=
        \chi_{\overline{T_{j_2}}}$. Hence
        $a_{j_1,T_{j_1}}[\{j_1,j_2\}]\cup a_{j_2,T_{j_2}}[\{j_1,j_2\}]
        = \vec{1}$ which implies
        $(a_{1,T_1}\cup \cdots \cup a_{k,T_k})[i] = 1$. Since the
        choice of $i$ was abitrary, the first part of $a_{1,T_1}\cup \cdots
        \cup a_{k,T_k}$ is all $1$s.

        The second part of each $a_{i,S_i}$ is simply $u_{i,S_i}$.
        Since the $T_i$ are pairwise disjoint and
        $T = T_1\cup\cdots\cup T_k$ is an $r$-hyperclique, we have
        $u_{1,S_1}\cup\cdots\cup u_{k,S_k} = [d']$. Thus,
        $a_{1,T_1}\cup \cdots \cup a_{k,T_k} = [d]$.
        
        \noindent
        ``$\impliedby$'':\karteek{I think it should be possible to
          combine the two directions of this proof}
        Suppose $(A_1,\ldots, A_k)$ is
        a $1$-instance of $\kintnd^*$. Then there exists
        $a_{1,S_1},\ldots, a_{k,S_k}$ such that
        $a_{1,S_1}\cup\cdots\cup a_{k,S_k} = [d]$. We will show that
        $S_1\cup\cdots\cup S_k$ forms an $r$-hyperclique of size
        $\ell$ in $G$.

        Observe that in the first $m\cdot \binom{k}{2}$ bits, for any
        block $\{j_1,j_2\}\in \binom{[k]}{2}$, exactly two of the
        vectors among $a_{1,S_1},\ldots, a_{k,S_k}$ are non-zero
        namely $a_{j_1,S_{j_1}}$ and $a_{j_2,S_{j_2}}$. Since
        $a_{j_1,S_{j_1}}[\{j_1,j_2\}]\cup a_{j_2,S_{j_2}}[\{j_1,j_2\}]
        = \vec{1}$, we have 
        $\chi_{\overline{S_1}}\cup \chi_{\overline{S_2}} = \vec{1}$. This means
        $S_{j_1}$ and $S_{j_2}$ are disjoint. This is true for every choice of
        $\{j_1,j_2\}$, and hence all of $S_1,\ldots, S_k$ must be pairwise disjoint.

        Since $a_{1,S_1}\cup\cdots\cup a_{k,S_k} = [d]$, and the
        second part of each $a_{i,S_i}$ equals $u_{i,S_i}$ by
        definition, it must be the case that
        $u_{1,S_1}\cup\cdots\cup u_{k,S_k} = [d']$. Combining this with the fact
        that the $S_i$ are pairwise disjoint, we can conclude that
        $S_1\cup\cdots\cup S_k$ forms an $r$-hyperclique in $G$.

        \textbf{Analysis.}  Finally, we argue about the
        complexity of the reduction. Note that $u_{i,S}$ have dimension $d'$, where
        $d' = \binom{m}{r}|\mcI| = \binom{m}{r}\cdot
        \binom{r+k-1}{k-1}$ and thus
        $d = \binom{k}{2}\cdot m + \binom{m}{r}\cdot
        \binom{r+k-1}{k-1}$, which is the dimension of $a_{i,S}$ vectors.

        Furthermore, each coordinate of $u_{i,S}$ can be computed as a simple monotone projection (i.e. either a constant or a single variable) from the input graph. The numbers of such projections is $d'$. Thus, overall, the $u_{i,S}$ vectors can be computed by a circuit of size $O(m^r)$ (recall that $k,r$ are constants). The first part of $a_{i,S}$ is also a projection and thus can be computed with size $O(m)=$. Thus, the overall size of the projection is $O(\binom{m}{\ell/k} \cdot \left(m + m^r\right))$. 

        Thus, if $\kintnd$ has size $s$ monotone (resp. depth-$h$) circuit, then we get a monotone (resp. depth-$h$) circuit for $\hypcliquemrl$ of size $s + O(\binom{m}{\ell/k} \cdot \left(m + m^r\right)$. 

        By our assumed hypothesis on the circuit size of $\hypcliquemrl$, we get the following bounds.

        \begin{align*}
      s &\geq \binom{m}{\ell}^{1-\varepsilon} - O\left(\binom{m}{\ell/k} \cdot \left(m + m^r\right)\right)
          \geq \left(\frac{m}{\ell}\right)^{(1-\varepsilon)\ell} - O\left(\binom{m}{\ell/k}\cdot m^r\right)\\
        &= \left(\frac{me}{\ell/k}\right)^{k\cdot\frac{\ell}{k}\cdot (1-\varepsilon)}\cdot\frac{1}{(ke)^{\ell(1-\varepsilon)}}
          - O\left(\binom{m}{\ell/k}\cdot m^r\right)\\
        &\geq \binom{m}{\ell/k}^{k(1-\varepsilon)}\cdot\frac{1}{2^{O(\ell)}}
          - O\left(\binom{m}{\ell/k}\cdot m^r\right)\\
        &\geq \binom{m}{\ell/k}^{k(1-2\varepsilon)} = n^{k (1-2\varepsilon)}&
    \end{align*}  
    where the final inequality uses the fact that $\ell$ is large
    enough, $\ell = o(m)$, and constant $r$, which implies that for any small
    enough constant $\varepsilon$ and large enough $m$
    \[
      2^{O(\ell)}\leq \binom{m}{\ell/k}^{o(1)} \text{ and }
      \binom{m}{\ell/k}\cdot m^r \leq \binom{m}{\ell/k}^{k\cdot
        (1-2\varepsilon)}
    \]
    This concludes the proof. \end{proof}

\noindent
    \begin{remark}[Shortcomings of this proof framework]
    \srikanth{Need to revise this remark to include the $\mathrm{AC}^0$ lower bound if we add it.}
    \label{rem:disadvantages}
    While Theorem~\ref{thm:redn} yields a simple proof of the monotone
    hardness of $\Intnd$, we note that it only holds in the regime of
    moderately large $d$ (i.e. $d = \exp((\log n)^{\Omega(1)})$). This
    is due to the fact that we need a near-tight lower bound
    (i.e. $\binom{m}{\ell}^{1-\varepsilon}$) for $\ell$-clique. We
    only have such lower bounds in the regime of small
    $\ell = o(\log m).$ While we have strong lower bounds for larger
    values of $\ell$~\cite{AlonBoppana87, CavalarKumarRossman22,
      BlasiokCCC25}, they stop being near-tight in this sense.
    
    Though the $\Intnd$ problem is already interesting in this regime
    of parameters~\cite{ABDN}, the importance of $\Intnd$ for many
    fine-grained problems (e.g. SETH) lies in the setting of small $d.$ In this setting, even assuming the best possible lower bounds for
    clique, e.g. a near-tight lower bound in the entire regime where
    $\ell = o(m)$,\footnote{Current known lower
      bounds~\cite{BlasiokCCC25,deRezendeVinyalsCCC25} only work for $\ell \leq \sqrt{m}$, and are furthermore not tight enough for the above reduction to be applicable  when $\ell = \omega(\log m)$.}
    this would only yield a lower bound for $\Intnd$ in the regime
    where $d = \Omega((\log n)^2)$.
    \end{remark}

        \section{Monotone version of OV Conjecture}\karteek{Rename section to something else}
        \label{sec:monotoneOVC}
        In this section, we prove Theorem~\ref{thm:maintheorem}. \nutan{Todo: Add a rough outline.}
        
        \subsubsection*{Distributions for $0$-inputs and $1$-inputs}
        We define a distribution $\mathbf{D_0}$ over $0$-inputs of $\Intnd$ as follows:
        \begin{enumerate}
        \item\label{step:oneD0} Choose $i,j\in [n]$ uniformly and independently at random.
        \item Choose a vector $b\in \strings$ by choosing
          $b_1,\ldots, b_d \in \{0,1\}$ uniformly and independently at
          random.
        \item Set $u_i$ to $b$ and $v_j$ to $\overline{b}$.
        \item \label{step:allones} For each $k\neq i$, set vector $u_k$ to all $1$s. For
          each $k'\neq j$, set vector $v_{k'}$ to all $1$s.         
        \end{enumerate}
        Observe that setting $u_i = b$ and $v_j=\overline{b}$ makes them orthogonal, and thus
        every input drawn from $\mathbf{D_0}$ is a $0$-input of $\Int$.
        
        We define the distribution $\mathbf{D_1}$ as $\mu_p$, the distribution where each bit is
        set to $1$ independently at random with probability $p$. We observe that inputs drawn
        from $\mathbf{D_1}$ are $1$-inputs of $\Int$ with very high probability:
        \begin{lemma} \label{lem:pr1}
           \begin{align*} \label{lem:ge1}
           	\Pr_{(\mcU,\mcV)\sim \mathbf{D_1}}[\,\Intnd(\mcU,\mcV) = 1\,] \geq 1 -  n^2 e^{-p^2d}
           \end{align*} 
        \end{lemma}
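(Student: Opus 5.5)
The bound follows from a union bound over all pairs of vectors. Under $\Do = \mu_p$, the lists $\mcU = (u_1,\ldots,u_n)$ and $\mcV = (v_1,\ldots,v_n)$ consist of $2n$ independent random vectors, each with i.i.d.\ $\mathrm{Bernoulli}(p)$ coordinates. By Definition~\ref{def:Int}, $\Intnd(\mcU,\mcV)=0$ exactly when some pair $(u_i,v_j)$ satisfies $u_i\cap v_j=\emptyset$. Hence
\[
\Pr[\Intnd(\mcU,\mcV)=0] \le \sum_{i,j\in[n]} \Pr[u_i\cap v_j=\emptyset].
\]
It therefore suffices to show that $\Pr[u_i\cap v_j=\emptyset]\le e^{-p^2 d}$ for each fixed pair $(i,j)$.

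Fix such a pair. For each coordinate $t\in[d]$, the event $u_{it}v_{jt}=1$ has probability exactly $p^2$, because the two bits are independent. The coordinates are also mutually independent, so
\[
\Pr[u_i\cap v_j=\emptyset]=\prod_{t=1}^d \Pr[u_{it}v_{jt}=0]=(1-p^2)^d\le e^{-p^2 d},
\]
where the last step uses $1-x\le e^{-x}$.

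Summing this bound over the $n^2$ pairs gives $\Pr[\Intnd=0]\le n^2e^{-p^2d}$. Taking complements yields the lemma. No step is a genuine obstacle. The only points that need care are that $u_i$ and $v_j$ come from different lists, which makes them independent, and that the union bound runs over all $n^2$ cross pairs.
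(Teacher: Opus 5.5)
Your proposal is correct and matches the paper's proof: both compute the disjointness probability $(1-p^2)^d \le e^{-p^2 d}$ for a single cross pair $(u_i,v_j)$ and then take a union bound over the $n^2$ pairs.
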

        \begin{proof}
            Let $x,y$ be $d$-bit strings drawn from the distribution $\mu_p$. The probability that $x$ and $y$ are disjoint is $(1-p^2)^d$. By definition, $\Int(\mcU,\mcV) = 0$ if and only if there exists at least one pair $u\in\mcU$ and $v\in \mcV$ that are disjoint. Using a union bound over all $n^2$ possible pairs of vectors, we get $\Pr[\,\Int(\mcU,\mcV) = 0\,] \leq n^2 (1-p^2)^d \leq n^2e^{-p^2d}$.
        \end{proof}

        \noindent
        \textbf{Proof Outline.} As described in Section~\ref{sec:techniques}, the idea behind the approximation method which we use to prove Theorem~\ref{thm:maintheorem} is to show that any small monotone circuit $C$ defined on the inputs of $\OVnd$ can be approximated, w.r.t. the two distributions $\mathbf{D_0}$ and $\mathbf{D_1}$ defined above, by a monotone $F$ from a well-defined family of `simple' monotone functions that we will call \emph{approximators}. Once this is done, we can easily show that $C$ cannot have been computing $\OVnd$ since no approximator can even approximately compute $\OVnd$ w.r.t.  $\mathbf{D_0}$ and $\mathbf{D_1}.$ The hard part is to prove that the circuit $C$ indeed has an approximator with low error. This is done carefully at each gate of the circuit, replacing it with a suitable approximator in a way that does not increase the error too much. 

        The conceptually new part of the lower bound of this paper is the definition of the approximator as more `standard-issue' constructions do not seem to work (see Remark~\ref{rem:marginal} below). Showing that this new kind of approximator is amenable to the inductive argument requires some technical work, with the highlight being a simpler version of the Kruskal-Katona theorem (Lemma~\ref{lem:mupmuq} below) that allows us to bound the error at OR gates effectively.

        \begin{remark}
            \label{rem:marginal} In this technical remark, we sketch why the approximation method applied with more `standard' approximators do not seem to work in our setting here. We will assume knowledge of the  standard applications of the approximation method. 
            
            Note that the marginal probabilities in $\Dz$ are $1/2$ (and cannot be higher for a similar construction) conditioned on a vector $u_i$ being chosen in Step 1 of the sampling procedure. This technical point prevents us from using `standard' techniques such as CNF approximators (which would otherwise be very natural) as in e.g.~\cite[Section 3]{CFMSY} in what follows. To apply these ideas, we would need to be able to trade off the probability that a clause of width $w+1$ is set to $0$ under $\Dz$ (roughly $n^{-1}2^{-(w+1)}$) with the number of clauses required to make a sunflower under $\Do$, which is roughly $w^w$ using the best known robust sunflower lemmas~\cite{RaoSunflowerSurvey}. Unfortunately, this means that removing all clauses of width more than $w$ incurs an error bound of up to $w^w\cdot n^{-1}2^{-O(w)}$ which we need to be bound by $n^{-(2-\epsilon)}$ for a union bound over the gates of the circuit. Unfortunately, there is no choice of $w$ that satisfies this constraint. This renders the standard technique of using such approximators inadmissible in this context.
        \end{remark}

        We now begin the main proof.

        Recall that the inputs to $\Intnd$ are two $n$-tuples $\mcU$
        and $\mcV$ of $d$-dimensional vectors. In this context, 
        we call a function $B:\{0,1\}^d\to\{0,1\}$ a
        `\emph{local function}' if the inputs to $B$ are a vector $u\in \mcU$
        or a vector $v\in \mcV$. i.e., the function $B$ does not take inputs from more than one vector. 
        \nutan{I think the notation $\var$ is a bit confusing. $\var(B)$ refers to $u_i$s or $v_i$s, but these are Boolean vectors. We are also interpreting them as variables. It is a minor point though.}\karteek{Check now?} \nutan{This reads well!}
        \srikanth{Replace `bubble function' by `local function'.}\karteek{Done}

        We are now ready to describe the structure of our \emph{approximator}.
        \begin{definition}
          \label{defn:approximator}
          Let $0\le m\le n$ and $0\le \delta\le 1$ be any fixed
          parameters. A Boolean function
          $\tilde{F}:(\{0,1\}^d)^n\times (\{0,1\}^d)^n\to \{0,1\}$ is
          an $(m,\delta)$-approximator if it is a constant, or it can be
          written as
          \[ \tilde{F} = \bigwedge_{i\in S_\mcU} B_i \wedge
            \bigwedge_{j\in S_{\mcV}} C_j \] where
          $S_\mcU,S_\mcV\subseteq [n]$ and the following three properties are satisfied:

          \begin{enumerate}
          \item\label{pty:bubble} For each $i\in S_\mcU$ ($j\in S_\mcV$), the function $B_i$ ($C_j$)
            is a local function on $u_i$ ($v_j$ respectively).
          \item\label{pty:card} $|S_{\mcU}|\leq m$ and $|S_{\mcV}|\leq m$.
          \item\label{pty:error} For all $i\in S_\mcU$, and all $j\in S_\mcV$, we have:
            \begin{align*}
              \Pr_{x\sim \Do}[B_i(x)=0] \geq \delta  \text{ and }
              \Pr_{x\sim \Do}[C_j(x)=0] \geq \delta \\
            \end{align*}
          \end{enumerate}

        \end{definition}

        \subsubsection*{Constructing an $(m,\delta)$-approximator}
        \label{sec:construction}
        We begin by showing how to construct an
        $(m,\delta)$-approximator for a monotone circuit for any
        $\delta < (1-p)$ inductively starting at the leaves. We denote the approximator
        at gate $g$ with $\apr{g}$.

        \begin{itemize}
          
        \item \textbf{Leaves:} The leaves of the circuit are labelled by either variables or
          constants. For a variable $x_i$, the approximator is simply
          the function $x_i$. It is straightforward to see that
          properties \ref{pty:bubble} and \ref{pty:card} are satisfied
          in this case. To see property \ref{pty:error}, observe that
          the probability over $\Do$ that $x_i$ is $1$ is $p$. Since
          $\delta < (1-p)$, property \ref{pty:error} is satisfied. For a
          leaf that is labelled by a constant $0$ or $1$, the
          approximator is also the same constant.
          
        \item $\mathbf{g\wedge h}$:\\ Let
          $\apr{g}$ and
          $\apr{h}$ be
          $(m,\delta)$-approximators for $g$ and $h$ respectively. 
          It is easy to see that $\apr{g}\wedge\apr{h}$ satisfies properties \ref{pty:bubble}
          and \ref{pty:error} as 
          conjunctions of local functions are also  local functions, and for any two functions
          that satisfy property \ref{pty:error}, their conjunction
          will also satisfy property \ref{pty:error}. 
          If $\apr{g}\wedge\apr{h}$ also satisfies property \ref{pty:card}, then 
          we define $\apr{g\wedge h} := \apr{g}\wedge\apr{h}$. Otherwise define $\apr{g\wedge h} := 0$.\karteek{This is now simplified because of a reviewer comment. The formal notation from here is moved to the error bound analysis.}

        \item $\mathbf{g\vee h}$:\\
          Let
          $\apr{g} = \bigwedge_{i\in S_{\mcU}^g} B_i^{g} \wedge
          \bigwedge_{j\in S_{\mcV}^g} C_j^g$ and
          $\apr{h}=\bigwedge_{i\in S_\mcU^h} B_i^{h} \wedge
          \bigwedge_{j\in S_{\mcV}^h} C_j^{h}$ be
          $(m,\delta)$-approximators for $g$ and $h$ respectively. Then we have:
          \begin{align}
            \label{eq:apprvee}
            \apr{g}\vee \apr{h} =& \bigwedge_{i\in S_{\mcU}^g, i'\in S_{\mcU}^h} B_i^{g}\vee B_{i'}^{h} \wedge
            \bigwedge_{i\in S_{\mcU}^g, j'\in S_{\mcV}^h} B_i^{g}\vee C_{j'}^{h} \nonumber \\
            \wedge& \bigwedge_{j\in S_{\mcV}^g, i'\in S_{\mcU}^h} C_i^{g}\vee B_{i'}^{h}
                    \wedge \bigwedge_{j\in S_{\mcV}^g, j'\in S_{\mcV}^h} C_{j}^{g}\vee C_{j'}^{h}
          \end{align}

          From the above, we include in our approximator only those
          functions that are local functions and satisfy property
          \ref{pty:error}. This is done as follows. Let
          $T_\mcU=\{i\in S_{\mcU}^g\cap S_{\mcU}^h ~|~ (B_i^{g}\vee
          B_{i}^{h})\text{ satisfies property \ref{pty:error}} \}$ and
          $T_\mcV=\{j\in S_{\mcV}^g\cap S_{\mcV}^h ~|~ (C_j^{g}\vee
          C_{j}^{h})\text{ satisfies property \ref{pty:error}} \}$.
          If $|T_\mcU| = |T_\mcV| = 0$, then we define $\apr{g\vee h} = 1$. Else
          we define $\apr{g\vee h}$ as
          \[ \apr{g\vee h} = \bigwedge_{i\in T_\mcU} (B_i^{g}\vee B_{i}^{h})
            \wedge \bigwedge_{j\in T_\mcV} (C_{j}^{g}\vee
            C_{j}^{h}) \]

          Observe that since each function in this conjunction is a
          local function, $\apr{g\vee h}$ satisfies property
          \ref{pty:bubble}.  Also note that
          $|T_\mcU| \le |S_{\mcU}^g\cap S_{\mcU}^h|\le |S_\mcU^g|\le
          m$, and similarly
          $|T_\mcV| \le |S_{\mcV}^g\cap S_{\mcV}^h|\le |S_\mcV^g|\le
          m$. So, $\apr{g\vee h}$ satisfies property \ref{pty:card} as
          well. Finally, property \ref{pty:error} is satisfied by
          $\apr{g\vee h}$ due to the definitions of $T_\mcU$ and
          $T_\mcV$.




        \end{itemize}

        \subsubsection*{Error bounds}
        We first show that any $(m,\delta)$-approximator has large error on our
        distributions. \nutan{This sentence is only one part of the story. We could also add to this sentence that we show that the approximator approximates monotone circuits well.}\karteek{The complete story happens in the Techniques section. Then we will add more English here to get the flow going.}
        \begin{lemma}
        \label{lem:ge}
          Let $\tilde{F} = \bigwedge_{i\in S_\mcU} f_i \wedge
          \bigwedge_{j\in S_{\mcV}} g_j$ be any $(m,\delta)$-approximator.
          Then, either $\tilde{F}$ is the constant $0$, or
          \[\Pr_{\Dz}[\tilde{F}=0] \le \frac{2m}{n}  \]
        \end{lemma}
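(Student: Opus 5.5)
We argue by cases on the form of $\tilde{F}$ and then use the structure of $\Dz$: exactly one vector $u_i$ of $\mcU$ and one vector $v_j$ of $\mcV$ are non-trivial, and every other vector is the all-ones vector $\vec{1}$. If $\tilde{F}$ is the constant $1$ then $\Pr_{\Dz}[\tilde{F}=0]=0$ and there is nothing to prove. So assume $\tilde{F}$ is a non-constant conjunction of local functions $f_i$ (on $u_i$, $i\in S_\mcU$) and $g_j$ (on $v_j$, $j\in S_\mcV$), with $|S_\mcU|,|S_\mcV|\le m$ by property~\ref{pty:card}.

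The key observation is that each local function takes value $1$ on the all-ones vector. Each $f_i$ is monotone, since it arises from variables by the monotone operations $\wedge,\vee$ in the construction. By property~\ref{pty:error}, $f_i$ is not the constant $0$: it equals $0$ with probability at least $\delta<1$ under $\Do$ but cannot be identically zero, since otherwise the conjunction would collapse to $0$. If we want to be careful, we treat the degenerate case separately: if some conjunct is identically $0$, then $\tilde{F}$ is the constant $0$ and the first alternative of the lemma holds. Otherwise there is some $x$ with $f_i(x)=1$, and monotonicity gives $f_i(\vec{1})=1$. The same argument applies to each $g_j$.

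Now sample from $\Dz$ and let $i^*,j^*$ be the indices chosen in Step~\ref{step:oneD0}. Conditioned on $i^*\notin S_\mcU$ and $j^*\notin S_\mcV$, every vector that any conjunct of $\tilde{F}$ reads is set to $\vec{1}$, so every conjunct evaluates to $1$ and $\tilde{F}=1$. Hence $\tilde{F}=0$ forces $i^*\in S_\mcU$ or $j^*\in S_\mcV$. A union bound, using that $i^*$ and $j^*$ are uniform on $[n]$, gives
\[
\Pr_{\Dz}[\tilde{F}=0]\le \frac{|S_\mcU|}{n}+\frac{|S_\mcV|}{n}\le \frac{2m}{n}.
\]

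The only delicate point is justifying that the local functions are monotone and not identically zero. We would state explicitly that the constructed approximators are monotone, because they are built from variables using $\wedge$ and $\vee$ only. We would also note that a conjunct which is identically $0$ makes $\tilde{F}$ the constant $0$, which is exactly the excluded case in the lemma.
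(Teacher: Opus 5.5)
Your proof is correct and is essentially the paper's argument. A monotone local function that is not identically $0$ must equal $1$ on $\vec{1}$, so under $\Dz$ it can be $0$ only when its vector is the one picked in Step~\ref{step:oneD0}, which happens with probability $1/n$; a union bound over the at most $2m$ conjuncts then gives $2m/n$. Your separate treatment of an identically-zero conjunct is what actually excludes that case, since property~\ref{pty:error} alone cannot; that treatment, and your remark that monotonicity comes from the construction, make explicit points the paper takes for granted.
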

        \begin{proof}
          Suppose $\tilde{F}$ is not the constant $0$. Assume, without
          loss of generality, that $|S_\mcU|>0$. For some $i\in S_\mcU$, let $B := f_i$ be
          a local function of $u_i$. Since $B$ is monotone \nutan{$S_{\mcU}$ is not a set of functions but a set of indices. So, we could say $f_i = B$ for some $i \in S_{\mcU}$ perhaps?}\karteek{Fixed!}
          and not the constant $0$, it must be that $B(1^d)=1$. Recall
          that all vectors not chosen by $\Dz$ in step
          \ref{step:oneD0} are assigned $1^d$. The probability that
          $\Dz$ picks $u_i$ in step \ref{step:oneD0} is $1/n$.  Thus,
          $\Pr_{(\mcU,\mcV)\sim D_0}[B(u_i)=0]\le 1/n$. By a union bound
          over all the local functions in $\tilde{F}$, the
          probability over $\Dz$ that at least one of them is $0$ is
          at most $2m/n$.
        \end{proof}

        We now show that our construction of the $(m,\delta)$-approximator
        incurs very small error at each gate. An observation that is trivial from our 
        construction is that at AND gates (OR gates), our approximator does 
        not incur any new errors on inputs drawn from $\Dz$ ($\Do$ respectively).
        So, we only need to bound the $\Do$-error at AND gates, and $\Dz$ error
        at OR gates.
        \begin{lemma}[Errors at $\wedge$ gates] \label{lem:le1}
          Let $\apr{g}$ and $\apr{h}$ be $(m,\delta)$-approximators
          for the functions computed at gates $g$ and $h$
          respectively. Then we have:
          \[ \Pr_{D_1}[\apr{g}\wedge\apr{h} \neq \apr{g\wedge h}]\le e^{-m\delta} \]

        \end{lemma}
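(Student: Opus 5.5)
The two sides can differ only when the AND construction defaulted to the constant $0$. By definition, $\apr{g\wedge h}=\apr{g}\wedge\apr{h}$ whenever the conjunction satisfies property~\ref{pty:card}. If it violates that property we set $\apr{g\wedge h}=0$, and then the two functions differ exactly on inputs where $\apr{g}\wedge\apr{h}=1$. (If either $\apr{g}$ or $\apr{h}$ is a constant, the conjunction trivially satisfies property~\ref{pty:card} and no error arises.) So it suffices to show
\[
\Pr_{\Do}\bigl[\apr{g}\wedge\apr{h}=1\bigr]\le e^{-m\delta}
\]
under the assumption that $\apr{g}\wedge\apr{h}$ violates property~\ref{pty:card}.

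First I write $\apr{g}\wedge\apr{h}$ in approximator form. For each $i\in S^g_\mcU\cup S^h_\mcU$, merge the local functions on $u_i$ into a single local function $B_i$: take $B_i^g\wedge B_i^h$ if $i$ lies in both index sets, and otherwise the one that exists. Do the same for $\mcV$ to get $C_j$ for each $j\in S^g_\mcV\cup S^h_\mcV$. Each merged function is still local. It also still satisfies property~\ref{pty:error}, since $\Pr[B_i^g\wedge B_i^h=0]\ge \Pr[B_i^g=0]\ge\delta$. Violating property~\ref{pty:card} means that one of the merged index sets, say $S^g_\mcU\cup S^h_\mcU$, has size greater than $m$.

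The key step is independence under $\Do=\mu_p$. The functions $B_i$ for distinct $i$ depend on disjoint blocks of variables (distinct vectors $u_i$), and $\mu_p$ is a product distribution, so the events $\{B_i=1\}$ are mutually independent. Therefore
\[
\Pr_{\Do}\bigl[\apr{g}\wedge\apr{h}=1\bigr]\le \Pr\Bigl[\bigwedge_{i\in S^g_\mcU\cup S^h_\mcU}B_i=1\Bigr]=\prod_i \Pr[B_i=1]\le (1-\delta)^{m+1}\le e^{-m\delta}.
\]

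The only subtle points are that the merged functions really are local on distinct vectors, so that independence applies, and that merging preserves property~\ref{pty:error}. Both follow immediately from the construction, so I expect no real obstacle.
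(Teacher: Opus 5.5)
Your proof is correct and matches the paper's argument: when the cardinality bound fails, merge the local functions on each vector and note that more than $m$ of them act on distinct vectors. Independence under $\mu_p$ together with property~\ref{pty:error} then gives $(1-\delta)^{m}\le e^{-m\delta}$. Your explicit treatment of constant approximators and the slightly sharper $(1-\delta)^{m+1}$ are harmless refinements.
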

        
        \begin{proof}
        Let
          $\apr{g} = \bigwedge_{i\in S_{\mcU}^g} B_i^{g} \wedge
          \bigwedge_{j\in S_{\mcV}^g} C_j^g$ and
          $\apr{h}=\bigwedge_{i\in S_\mcU^h} B_i^{h} \wedge
          \bigwedge_{j\in S_{\mcV}^h} C_j^{h}$.
          Recall that $\apr{g}\wedge\apr{h}$ always satisfies
          properties \ref{pty:bubble} and \ref{pty:error}.  If
          property \ref{pty:card} was also satisfied, we defined
          $\apr{g\wedge h}= \apr{g}\wedge \apr{h}$. Thus, in this
          case, the lemma is trivially true.

          When $\apr{g}\wedge \apr{h}$ violates property
          \ref{pty:card}, we defined $\apr{g\wedge h}$ to be the
          constant $0$. In this case it suffices to determine the
          probability that $\apr{g}\wedge \apr{h}$ is $1$.  We can
          write
          $\apr{g}\wedge \apr{h} = \bigwedge_{i\in S_{\mcU}^g\cup
            S_{\mcU}^h} P_i\wedge \bigwedge_{j\in S_{\mcV}^g\cup
            S_{\mcV}^h} Q_j$ where $P_i$ and $Q_j$ are defined as follows.\\
          \begin{minipage}{.45\textwidth}
            \[ P_i =
              \begin{cases}
                B_i^g\wedge B_i^h &\text{ for } i\in S_{\mcU}^g\cap S_{\mcU}^h\\
                B_i^g &\text{ for } i\in S_{\mcU}^g\setminus S_{\mcU}^h\\
                B_i^h &\text{ for } i\in S_{\mcU}^h\setminus S_{\mcU}^g\\
              \end{cases}
            \]
          \end{minipage}
                    \begin{minipage}{.45\textwidth}
            \[ Q_j =
              \begin{cases}
                C_j^g\wedge C_j^h &\text{ for } j\in S_{\mcV}^g\cap S_{\mcV}^h\\
                C_j^g &\text{ for } j\in S_{\mcV}^g\setminus S_{\mcV}^h\\
                C_j^h &\text{ for } j\in S_{\mcV}^h\setminus S_{\mcV}^g\\
              \end{cases}
            \]
          \end{minipage}\\
            
            Since property \ref{pty:card}
          is violated, assume without loss of generality that
          $|S_{\mcU}^g\cup S_{\mcU}^h| > m$. Then we have,
          \begin{align*}
    \Pr_{\Do}[\bigwedge_{i\in S_{\mcU}^g\cup S_{\mcU}^h} P_i = 1] &= \prod_i\Pr[P_i = 1]\\
                                                                          &\le \left(1-\delta \right)^{m}\\
            &\le e^{-m\delta}            
          \end{align*}
          where the first equality is because the $P_i$s are on a
          disjoint sets of variables, and thus the events that they
          are $0$ are independent. The inequality in the next line is
          from property \ref{pty:error}.          
        \end{proof}

        We need the following technical lemma before bounding the errors incurred
        at OR gates.
        \begin{lemma}
          \label{lem:mupmuq}
                          Let $B: \{0,1\}^d \rightarrow \{0,1\}$ be
                          any monotone Boolean function. Let
                          $0 < p < q < 1$ and let $\mu_p$ and $\mu_q$
                          be probability distributions where each bit
                          is set to $1$ with probability $p$ and $q$
                          respectively. Then,
			\begin{align*}
                          \Pr_{u \sim \mu_q}[B(u) = 0] \leq \left(\Pr_{u \sim \mu_p}[B(u) = 0]\right)^{\floor{q/p}}
			\end{align*}
		\end{lemma}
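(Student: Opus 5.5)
Let $t := \floor{q/p}$. The plan is a coupling argument: realize a $\mu_q$-distributed vector as something that is coordinatewise at least the OR of $t$ independent $\mu_p$-distributed vectors. Then I use monotonicity of $B$ together with the fact that $B$ being $0$ on an OR of vectors forces $B$ to be $0$ on each of them.

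\textbf{Step 1 (coupling).} Draw $u^{(1)},\ldots,u^{(t)}$ independently from $\mu_p$ and let $w = u^{(1)}\vee\cdots\vee u^{(t)}$, taken coordinatewise. The coordinates of $w$ are independent, and each equals $1$ with probability $r := 1-(1-p)^t$. By the union bound, $r \le tp \le q$. Hence $w$ is distributed as $\mu_r$ with $r\le q$. Next, independently in each coordinate where $w_k=0$, set the bit to $1$ with probability $(q-r)/(1-r)$, producing $u\ge w$. Each coordinate of $u$ is $1$ with probability $r + (1-r)\cdot\frac{q-r}{1-r} = q$, and the coordinates are independent, so $u\sim\mu_q$. (If $r=1$, which only happens when $p=1$, this case is excluded by $p<1$.)

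\textbf{Step 2 (monotonicity).} Suppose $B(u)=0$. Since $u\ge w\ge u^{(s)}$ for every $s$, monotonicity of $B$ gives $B(u^{(s)})=0$ for all $s\in[t]$. Therefore
\[
\Pr_{u\sim\mu_q}[B(u)=0] \le \Pr\Bigl[\textstyle\bigwedge_{s=1}^t B(u^{(s)})=0\Bigr] = \prod_{s=1}^t \Pr_{u^{(s)}\sim\mu_p}[B(u^{(s)})=0] = \Bigl(\Pr_{\mu_p}[B=0]\Bigr)^{t},
\]
where the product form uses the independence of the $u^{(s)}$.

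The only delicate point is Step 1: checking that the OR of $t$ copies has per-bit density at most $q$, so that a monotone padding up to exactly $\mu_q$ exists. The union bound $1-(1-p)^t\le tp\le q$ is precisely why the exponent is $\floor{q/p}$ rather than $q/p$. Everything else is routine.
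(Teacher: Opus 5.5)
Your proof is correct and matches the paper's argument: take the coordinatewise OR of $t=\lfloor q/p\rfloor$ independent $\mu_p$ samples, use the union bound to get per-bit density at most $tp\le q$, and combine monotonicity with independence to obtain the product bound. Your explicit padding coupling only spells out a step the paper asserts directly, namely that a monotone $B$ is at least as likely to be $0$ under $\mu_{q'}$ as under $\mu_q$ when $q'\le q$.
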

		\begin{proof}
                  Let $t = \floor{q/p}$ and let
                  $u_1,u_2,\ldots,u_t \sim \mu_p$ be $d$-bit vectors drawn independently at random 
                  from $\mu_p$. Let $v = u_1 \vee u_2 \vee \ldots \vee u_t$ be the 
                  bit-wise OR of the $u_i$s. We denote the $i$th bit of vector $v$ with
                  $v_i$ and let $q'$ denote $\Pr[v_i = 1]$. We will first observe that
                  $q' \leq q$ and hence the expected number of zeros in
                  a string drawn from the distribution $\mu_{q'}$ will
                  be greater than equal to the expected number of
                  zeros in a string drawn from the distribution
                  $\mu_q$.
                  Using the definition of $v$ and a union bound, we can conclude that $q'\le q$:
			\begin{align*}
                          q' = \Pr_{v  \sim \mu_{q'}}[v_i = 1]  = \Pr_{u_1,u_2,\ldots,u_t \sim \mu_{p}}[(u_{1i} = 1) \vee (u_{2i} = 1) \vee \ldots \vee (u_{ti} = 1)]  \leq tp \leq q
			\end{align*}
             Since $B$ is monotone,
                   $B$ is more likely to output $0$ on a
                  string drawn from $\mu_{q'}$ than on a string
                  drawn from the distribution $\mu_q$. i.e., $
                          \Pr_{v \sim \mu_q}[B(v) = 0] \le \Pr_{v \sim \mu_{q'}}[B(v) = 0]
                                                       = \Pr_{u_1,u_2,\ldots,u_t \sim \mu_{p}}[B(u_1\vee u_2\vee \ldots \vee u_t) = 0]$.
			Combining this with the fact if
                        $B(u_1 \vee u_2 \vee \ldots \vee u_t) = 0$
                        then $\forall$ $i\in [t]$, $B(u_i) =
                        0$, we can conclude:
			\begin{align}
                          \Pr_{u \sim \mu_q}[B(u) = 0] \leq \Pr_{u_1,u_2,\ldots,u_t \sim \mu_{p}}[B(u_1\vee u_2\vee \ldots \vee u_t) = 0]&\le \prod _{i=1}^t\Pr_{u_i \sim \mu_p}[B(u_i) = 0] \notag \\
                                                                                                       & = \left(\Pr_{u_i \sim \mu_p}[B(u_i = 0)]\right)^{\floor{q/p}}  \label{eq:p3}
			\end{align}

                      \end{proof}

        Recall that $p$ is the probability with which $\Do$ sets each bit to
        $1$. We now determine the new errors incurred at $\vee$ gates.
        \begin{lemma}[Errors at $\vee$ gates]  \label{lem:le0}
          Fix parameters $0\le m\le n$ and $0\le \delta\le 1$ such
          that $\delta^{\floor{1/2p}}\le 1/n$.  Let $\apr{g}$ and $\apr{h}$ be
          $(m,\delta)$-approximators for the functions computed at
          gates $g$ and $h$ respectively. Then we have:
          \[ \Pr_{D_0}[\apr{g}\vee \apr{h} \neq \apr{g\vee h}]\in O\left(\frac{m^2}{n^2}\right) \]
        \end{lemma}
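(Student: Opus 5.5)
Since $\apr{g\vee h}$ is obtained from the CNF-like expansion (\ref{eq:apprvee}) by dropping conjuncts, we always have $\apr{g\vee h}\ge \apr{g}\vee\apr{h}$. So the only possible error is an input $x\sim\Dz$ with $\apr{g\vee h}(x)=0$ but $\apr{g}\vee\apr{h}(x)=1$, and it suffices to bound $\Pr_{\Dz}[\apr{g\vee h}=0]$. (If either approximator is a constant, the construction is trivially exact or gives the constant $1$ with no error, so I assume both are genuine conjunctions.) A sample from $\Dz$ is determined by the pair $(i^*,j^*)$ chosen in Step~\ref{step:oneD0} and by $b$. Every vector other than $u_{i^*}$ and $v_{j^*}$ is $1^d$, and every monotone local function that is not constant $0$ evaluates to $1$ on $1^d$. (Local functions that are constant $0$ violate nothing here: a constant-$0$ factor in the kept conjunction would also make $\apr{g}\vee\apr{h}$ zero on that coordinate.) Hence $\apr{g\vee h}(x)=0$ forces one of two things: $i^*\in T_\mcU$ and $(B^g_{i^*}\vee B^h_{i^*})(b)=0$, or $j^*\in T_\mcV$ and $(C^g_{j^*}\vee C^h_{j^*})(\overline b)=0$.

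I split the error event by where $(i^*,j^*)$ lands. First, suppose the error is caused by the $u_{i^*}$ factor with $i^*\in T_\mcU$. Then we need $\apr{g}\vee\apr{h}=1$ yet $B^g_{i^*}(b)=B^h_{i^*}(b)=0$, so both $\apr g$ and $\apr h$ vanish on the $u$-side. For $\apr{g}\vee\apr{h}$ to still equal $1$ is then impossible. So the error on this side cannot come from $\apr{g\vee h}$'s $u$-factor alone: $\apr{g}(x)=0$ already. Hence whenever $\apr{g\vee h}(x)=0$ through the $T_\mcU$ factor at $i^*$, we also have $\apr{g}(x)=\apr{h}(x)=0$, and there is no error. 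The same holds for the $T_\mcV$ factor. This seems to make the error zero, which is suspicious, so I will double-check against the dropped terms. Indeed, the truly dangerous terms are the cross terms $B^g_{i}\vee C^h_{j}$ and $B^g_i\vee B^h_{i'}$ with $i\neq i'$, together with the dropped diagonal terms. Dropping terms only increases the function, so it creates $\Do$-free, $\Dz$-error of the form ``$\apr{g\vee h}=1$ but $\apr g\vee\apr h=0$.'' I therefore had the direction backwards: the $\Dz$-error at OR gates is exactly $\Pr_{\Dz}[\apr g\vee\apr h=0 \wedge \apr{g\vee h}=1]$. This requires some dropped conjunct to be $0$, and I union bound over the dropped conjuncts.

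Cross terms involving two distinct vectors, say $B^g_i\vee B^h_{i'}$ with $i\ne i'$ or $B^g_i\vee C^h_j$, can be $0$ only if both underlying vectors are the non-$1^d$ vectors. For $B^g_i\vee C^h_j$ this requires $(i^*,j^*)=(i,j)$, which has probability $1/n^2$. For $B^g_i\vee B^h_{i'}$ with $i\neq i'$ it is impossible, since only one $u$ is non-$1^d$. There are at most $|S^g_\mcU||S^h_\mcV|+|S^g_\mcV||S^h_\mcU|\le 2m^2$ such cross terms, contributing $O(m^2/n^2)$. Diagonal terms with $i\in S^g_\mcU\setminus S^h_\mcU$ pair with something on another index and are covered by the same argument.

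The main obstacle is the dropped diagonal terms $i\in (S^g_\mcU\cap S^h_\mcU)\setminus T_\mcU$, where $B:=B^g_i\vee B^h_i$ fails property~\ref{pty:error}, i.e.\ $\Pr_{\mu_p}[B=0]<\delta$. Such a term is $0$ only if $i^*=i$ (probability $1/n$) and $B(b)=0$ for $b\sim\mu_{1/2}$. Here I apply Lemma~\ref{lem:mupmuq} with $q=1/2$: $\Pr_{\mu_{1/2}}[B=0]\le \delta^{\floor{1/2p}}\le 1/n$. So each such term contributes at most $1/n^2$, and there are at most $2m$ of them, giving $O(m/n^2)$. The $v$-side is identical, since $\overline b$ is also uniform. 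Summing gives $O(m^2/n^2)$.
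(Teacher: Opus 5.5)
Your final argument, once you correct the direction of the error, is correct and is essentially the paper's proof. Like the paper, you take a union bound over the conjuncts of (\ref{eq:apprvee}) that are dropped in forming $\apr{g\vee h}$: the cross terms $B^g_i\vee C^h_j$ cost $1/n^2$ each, the terms $B^g_i\vee B^h_{i'}$ with $i\neq i'$ cost nothing, and the dropped diagonal terms are bounded via Lemma~\ref{lem:mupmuq} with $q=1/2$. Your count of at most $2m$ dropped diagonal terms is tighter than the paper's $m^2$. In a write-up, delete your first two paragraphs, which assert the false error direction ($\apr{g\vee h}=0$ while $\apr{g}\vee\apr{h}=1$) and conclude the error is zero, and start directly from $\apr{g\vee h}\ge\apr{g}\vee\apr{h}$.
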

        \begin{proof}
          Let
          $\apr{g} = \bigwedge_{i\in S_{\mcU}^g} B_i^{g} \wedge
          \bigwedge_{j\in S_{\mcV}^g} C_j^g$ and
          $\apr{h}=\bigwedge_{i\in S_\mcU^h} B_i^{h} \wedge
          \bigwedge_{j\in S_{\mcV}^h} C_j^{h}$ be
          $(m,\delta)$-approximators for $g$ and $h$
          respectively. Recall that $\apr{g}\vee \apr{h}$ looks like
          equation \ref{eq:apprvee}. When constructing
          $\apr{g\vee h}$, we removed functions of two kinds
          from equation \ref{eq:apprvee}:
          \begin{enumerate}
          \item Functions that were not local functions. These come in two types:
            \begin{enumerate}
            \item $(B_i^g\vee B_j^h)$ for some $i\neq j$. Recall that
              in the random process that defines $\Dz$, all vectors in
              $\mcU$ except one (randomly chosen) vector are made
              $1^d$. Thus, it is always true that either $u_i$ or $u_j$ is assigned
              $1^d$. This means either $B_i^g$ or $B_j^h$ must be $1$
              on every input drawn from $\Dz$. Hence there is no error
              when not including such functions in $\apr{g\vee h}$.
            \item\label{itm:oneB} $(B_i^g\vee C_j^h)$. If either of $i$ or $j$ were
              not the chosen indices in step \ref{step:oneD0} of
              $D_0$, then similar to the previous case, either $B_i^g$
              or $C_j^h$ will be $1$.  The probability that $i$ as
              well as $j$ were the chosen indices in step
              \ref{step:oneD0} of $\Dz$ is $1/n^2$. There are at most
              $m^2$ such functions, and hence by a union bound, the
              error incurred is at most $m^2/n^2$.
             
            \end{enumerate}
          \item Functions that were local functions, but did not
            satisfy property \ref{pty:error}. Suppose
            $B_i^g\vee B_i^h$ did not satisfy property \ref{pty:error}.
            We will determine the probability that $B_i^g\vee B_i^h$ equals
            $0$.

            Observe that if $i$ was the first index chosen by the
            distribution $\Dz$ in step \ref{step:oneD0}, then each bit
            of the vector $u_i$ is chosen uniformly and independently
            at random. Let $B=B_i^g\vee B_i^h$. Since $B$ does not
            satisfy property \ref{pty:error}, we have
            $\Pr_{u_i\sim \Do}[B(u_i)=0] < \delta$. Applying lemma
            \ref{lem:mupmuq} to $B$ with $q = 1/2$, we obtain:
            
            \[ \Pr_{u_i\sim \mu_{1/2}}[B(u_i) = 0] \le \left(\Pr_{x\sim \Do}[B(u_i)=0]\right)^{\floor{1/2p}} < \delta^{\floor{1/2p}} \]

              Thus, $\Pr_{(\mcU,\mcV)\sim \Dz}[B(u_i) = 0] = \Pr[i\text{ was chosen}]\cdot\Pr_{u_i\sim \mu_{1/2}}[B(u_i) = 0] < \delta^{\floor{1/2p}}/n$.
              By a union bound over at most $m^2$ such functions, the
              total error incurred is at most
              $\delta^{\floor{1/2p}}m^2/n$. Since
              $\delta^{\floor{1/2p}} \le 1/n$, the lemma follows.
            
          \end{enumerate}
          
        \end{proof}

            We can now show the main theorem by combining the error bounds established above.
                \maintheorem*
                \begin{proof}
              	Set $p = \epsilon/4$, $c = 100/\epsilon^2$,
                $d = c\log n$, $m = n^{\epsilon}$ and
                $\delta = \frac{1}{n^{\epsilon/2}}$.  Let $C$ be any monotone circuit computing $\Intnd$ using $s$ many
                gates.  Let $\tilde{F}$ be the $(m,\delta)$-approximator for $C$ constructed 
                as described in Section \ref{sec:construction}. As mentioned before, the 
                approximation at AND gates incurs new errors only over the distribution 
                $\Do$. Similarly, when approximating OR gates, we only incur new 
                errors on inputs drawn from $\Dz$. Using error bounds from Lemma \ref{lem:le0},
                Lemma \ref{lem:le1}, and a union bound over the $s$ gates of $C$, we have:
                \begin{align}
                    \Pr_{\Dz}[\tilde{F}\neq C] &\le s\cdot O\left(\frac{m^2}{n^2}\right) \label{eq:local0error} \\
                    \Pr_{\Do}[\tilde{F}\neq C] &\le s e^{-m\delta} \label{eq:local1error}
                \end{align}
                
                Now we have two cases with respect to whether $\tilde{F}$ is the constant $0$
                or a non-constant function.
              
              	\begin{itemize}
                \item \textbf{Case 1:} $\tilde{F}$ is the
                  constant 0. In this case, $\tilde{F}$ is wrong on all $1$-inputs of $\Int$ 
                  drawn from the distribution $\Do$. So, we have:
                  \begin{align*}
                    \Pr_{\mcU,\mcV \sim \Do}[C(\mcU,\mcV) \neq \tilde{F}(\mcU,\mcV)] &= \Pr_{\mcU,\mcV \sim \Do}
                    [C(\mcU,\mcV) = 1 \text{ and } \tilde{F}(\mcU,\mcV)=0]\\ 
                    &= \Pr_{\mcU,\mcV \sim \Do}[\Intnd(\mcU,\mcV) = 1]\\
                    &\ge 1 - n^2 e^{-p^2d}\in \Omega(1)
                  \end{align*}
                                
                  where the first two equalities are because $\tilde{F}$ is assumed to be $0$.  The final inequality is from Lemma~\ref{lem:pr1}. \nutan{Better to use capital letters for "Lemma" etc.?}\karteek{Done} Combining this with 
                  Equation~\ref{eq:local1error}, we obtain:
                  \begin{align*}
                    \Omega(1) \le \Pr_{\Do}[C \neq \tilde{F}] \le s e^{-m\delta}.
                  \end{align*}
                  
                  This gives a lower bound of $s = \Omega(e^{m\delta})$.

                \item \textbf{Case 2: } $\tilde{F}$ accepts at least one input. Combining the error bound from 
                Lemma~\ref{lem:ge} with Equation~\ref{eq:local0error}, we get:
                \begin{align*}
                   1 - \frac{2m}{n}  \le \Pr_{\mcU,\mcV \sim \Dz}[C(\mcU,\mcV)\neq \tilde{F}(\mcU,\mcV)] \le s\cdot O\left(\frac{m^2}{n^2}\right)
                \end{align*}
              	This allows us to conclude $s \ge \Omega\left(\frac{n^2}{m^2}\right)\left(1-\frac{2m}{n}\right) = \Omega(n^{2(1-\epsilon)})$.
                \end{itemize}
                Taking into account both of the above two cases, we get $s \geq \min({\Omega(n^{2(1-\epsilon)}),e^{n^{\epsilon/2}}})$. Thus, for large $n$, we have $s \geq \Omega(n^{2(1-\epsilon)})$.
              \end{proof}
                \nutan{Removed the remark about kOV.}

              \nutan{Todo: add Jukna criterion related comment here.}\karteek{Done}

    \section{Formula and Branching Program Lower Bounds}\label{sec:Nechiporuk}
    \srikanth{Replace $\twoint$ by $\Intnd$.} \karteek{Done}
    In this section we show lower bounds on the size of (not necessarily monotone) formulas and branching programs computing $\Intnd$. We use the classic technique of Nechiporuk \cite{juknabook,nechiporuk1966} \karteek{TODO (Tameem?): Cite Nechiporuk's result or Jukna's book. (This is done!)} that relates the formula and branching program complexity of a function with the number of sub-functions with respect to disjoint subsets of variables.

    \karteek{TODO: (Prelims) Introduce/refer to standard text for definition of formulas and BPs. Done.}
    \srikanth{Please note the footnote that I have added. If the theorem is changed to this more general version, this footnote can be removed.}\karteek{Decided not to rewrite stuff at this point.}
    Throughout this section, we shall use $L(f)$ to denote the number of leaves in a smallest
                Boolean formula (with arbitrary binary gates)\footnote{A similar proof also works for the case where the fan-in of the formulas are bounded by some absolute constant $c_0 \geq 2$ and each gate is labelled by an arbitrary Boolean function on at most $c_0$ variables.} computing $f$, and BP$(f)$ to denote the number of
                nodes in a smallest branching program computing $f$. We first state Nechiporuk's theorem.
	\begin{theorem}[\cite{juknabook,nechiporuk1966}] \label{thm:Nechiporuk} Let $f$ be a Boolean function on a
          set of variables $X$.  Let
          $Y_1,Y_2,\ldots,Y_m$ be disjoint subsets of $X$, and let
          $s_i$ be the number of distinct sub-functions of $f$ on
          $Y_i$. Then, there exists a constant
          $\epsilon > 0$ such that
		\begin{align}
                  L(f) &\geq \frac{1}{4} \sum_{i = 1}^m \log s_i \label{eq:1}\\
                  \text{BP}(f) &\geq \epsilon \sum_{i = 1}^m \frac{\log s_i}{\log \log s_i} \label{eq:2}
		\end{align}
        \end{theorem}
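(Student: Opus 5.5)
The statement is Nechiporuk's classical theorem. My plan is to reprove it by the standard counting argument: first decompose a small formula or branching program into pieces, one per block $Y_i$, and then count how many distinct functions each piece can produce.

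\textbf{Formulas.} Fix an optimal formula $F$ for $f$ and let $\ell_i$ be the number of leaves labelled by variables of $Y_i$, so that $L(f)\ge \sum_i \ell_i$. For each $i$, fix all variables outside $Y_i$ to get a subfunction. In $F$, consider the minimal subtree spanning the $\ell_i$ leaves labelled by $Y_i$. The vertices of degree at least $2$ in this subtree, together with the leaves, form a set of at most $2\ell_i$ special nodes. These split the subtree into at most $2\ell_i$ paths.

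Under a restriction of the variables outside $Y_i$, each path hanging off-tree computes a constant. Each node on a path therefore acts on the incoming value as one of four unary functions: $x$, $\neg x$, $0$, or $1$. Composing along a path still gives one of these $4$ functions. Hence the subfunction on $Y_i$ is determined by at most $2\ell_i$ choices from a $4$-element set, plus a constant for the output if needed. This gives
\[
s_i \le 4^{2\ell_i+1}, \qquad \text{so} \qquad \ell_i \ge \tfrac14\log s_i - O(1).
\]
With slightly more careful bookkeeping (contracting paths of length zero and handling $s_i\le 2$ separately), I expect to obtain $\ell_i \ge \frac14 \log s_i$ exactly. Summing over $i$ yields \eqref{eq:1}.

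\textbf{Branching programs.} Let $b_i$ be the number of nodes of the branching program that query variables in $Y_i$. After fixing the variables outside $Y_i$, every other node becomes a deterministic forwarding node. The resulting subfunction is therefore determined by a branching program whose only query nodes are these $b_i$ nodes, where each outgoing edge is redirected to another $Y_i$-node, to the accept sink, or to the reject sink, and where the start node is also specified. Each such node has $O(b_i)$ choices for each of its two edges and for its variable from $Y_i$.

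The main obstacle is that the choice of variable costs $\log|Y_i|$, which could be large. This is avoided because the variable labels are fixed by the original program and do not depend on the restriction; only the edges and the start node vary. So the number of subfunctions satisfies
\[
s_i \le (b_i+2)^{2b_i+1},
\qquad \text{hence} \qquad
\log s_i \le O(b_i\log b_i).
\]
Inverting this gives $b_i \ge \epsilon \log s_i/\log\log s_i$ for an absolute constant $\epsilon$, with cases where $s_i$ is bounded handled by adjusting $\epsilon$. Summing over $i$, since the $b_i$ count disjoint sets of nodes, yields \eqref{eq:2}.
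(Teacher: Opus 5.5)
Your proposal is the standard Nechiporuk counting argument (compressing the subtree spanned by the $Y_i$-leaves into $O(\ell_i)$ paths, each acting as one of $x,\neg x,0,1$ under a restriction, and redirecting the out-edges and start node among the $b_i$ nodes querying $Y_i$ to get $s_i\le(b_i+2)^{2b_i+1}$). This is the proof in the sources the paper cites without reproving it, so your argument is correct and matches the intended one, with one caveat about the degenerate case you planned to ``handle separately'': when $f$ does not depend on $Y_i$ one has $\ell_i=b_i=0$ but possibly $s_i=2$ (e.g.\ $f=x_0$ with singleton blocks $\{x_1\},\dots,\{x_m\}$, $m\ge 5$, gives $L(f)=1<m/4$), so the exact constant $\frac14$ and the $\log\log s_i$ denominator genuinely fail there and such blocks must be dropped (sum over $s_i\ge 3$ only), which is harmless for the paper since all its $s_i$ are huge and for $\ell_i\ge 1$ the careful count $s_i\le 4^{2\ell_i-1}$ gives the $\frac14$ with room to spare.
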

        
        \begin{theorem}\label{thm:formulaBPLowerBound}
          We obtain lower bounds in two separate ranges of $d$:
          \begin{enumerate}
          \item For $d \ge c\log n$ for any $c > 1$,
            \begin{align*}
              L(\Intnd) &\ge \Omega(n^2d)\\
              \text{BP}(\Intnd) &\ge \Omega\left(\frac{n^2d}{\log(nd)}\right)
            \end{align*}
            
          \item For $d \le \log n$,
            \begin{align*}
              L(\Intnd)&\ge \Omega(n2^d/\sqrt{d})\\
              \text{BP}(\Intnd)&\ge \Omega\left(\frac{n 2^d}{\sqrt{d}\log(2nd)}\right)
            \end{align*}
          \end{enumerate}
	\end{theorem}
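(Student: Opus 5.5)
We apply Nechiporuk's theorem (Theorem~\ref{thm:Nechiporuk}), taking the blocks $Y_i$ to be the $d$ bits of the vector $u_i\in\mcU$, for $i\in[n]$. By the symmetry between the $u_i$, it suffices to lower bound the number $s$ of distinct sub-functions of $\Intnd$ on the bits of $u_1$. With $n$ blocks this gives $L(\Intnd)\ge \frac{n}{4}\log s$. For branching programs we get $\mathrm{BP}(\Intnd)\ge \epsilon\, n\log s/\log\log s$. In each regime the target is $\log s = \Omega(nd)$ or $\log s=\Omega(2^d/\sqrt d)$ respectively; we also have $\log\log s\le O(\log(nd))$ because $\log s\le 2nd$.

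\textbf{Restrictions.} We fix $u_2=\cdots=u_n=1^d$, so that these vectors intersect every nonempty $v_j$ and are harmless. We then fix $v_1,\dots,v_n$ to the elements of some set $X\subseteq\{0,1\}^d\setminus\{0^d\}$ with $|X|\le n$ (repetitions allowed to pad to $n$). The resulting sub-function on $u_1$ is $f_X(u_1)=\bigwedge_{x\in X}[u_1\cap x\ne\emptyset]$.

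\textbf{Injectivity.} We restrict $X$ to an antichain $\mathcal{F}$ (all sets of size $\lfloor d/2\rfloor$) so that distinct $X$ give distinct functions. Take $X\ne X'$, say $x\in X\setminus X'$, and set $u_1=\overline{x}$. Then $f_X(u_1)=0$. For every $y\in X'$ we have $y\neq x$ and $|y|=|x|$, hence $y\not\subseteq x$, so $y\cap\overline x\ne\emptyset$ and $f_{X'}(u_1)=1$. Therefore $s\ge$ the number of subsets of $\mathcal F$ of size at most $n$.

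\textbf{Counting.} In regime 2 ($d\le\log n$) we have $|\mathcal F|=\binom{d}{d/2}=\Theta(2^d/\sqrt d)\le n$, so every subset is allowed. This gives $\log s\ge\Theta(2^d/\sqrt d)$, and hence $L\ge\Omega(n2^d/\sqrt d)$ and $\mathrm{BP}\ge\Omega(n2^d/(\sqrt d\log(2nd)))$.

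In regime 1 ($d\ge c\log n$) we count subsets of size exactly $n$, which gives $s\ge\binom{N}{n}\ge (N/n)^n$ with $N=\binom{d}{d/2}\ge 2^d/(d+1)$. So $\log s\ge n(d-\log(d+1)-\log n)$. This is $\Omega(nd)$ provided $\log n\le (1-\Omega(1))d$, which holds when $c>1$ is a constant and $d$ is large, since $\log(d+1)=o(d)$. Plugging in yields $L\ge\Omega(n^2d)$ and $\mathrm{BP}\ge\Omega(n^2d/\log(nd))$.

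\textbf{Main obstacle.} The delicate step is the counting in regime 1 near $d=c\log n$ with $c$ close to $1$. There the slack $d-\log n-\log(d+1)$ is only $(1-1/c)d-O(\log d)$, so the hidden constant depends on $c$. We need to make sure this is $\Omega(d)$ uniformly for all large $n$, with the constant depending only on $c$. We also need to confirm that allowing repeated vectors in the padding, and excluding $0^d$, are legitimate choices of restriction.
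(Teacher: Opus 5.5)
Your proposal is correct and is essentially the paper's argument. It uses the same restriction: the other vectors of one list are set to $1^d$ and the other list is filled with middle-layer vectors, where padding by repetition is legitimate because sub-functions range over all assignments to the remaining variables. It also uses the same counts, namely $\binom{N}{n}\ge (N/n)^n$ sub-functions for $d\ge c\log n$ (with the constant depending on $c$, exactly as in the paper) and $2^{\binom{d}{d/2}}$ sub-functions for $d\le\log n$, plugged into Nechiporuk's theorem.

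The differences are cosmetic. You use only the $n$ blocks of $\mcU$ rather than all $2n$ blocks, which loses a factor of $2$. You also prove injectivity explicitly via $u_1=\overline{x}$, where the paper appeals to the clauses being exactly the maxterms.
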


	To use Theorem \ref{thm:Nechiporuk}, we first bound the number of distinct sub-functions of $\Intnd$ with respect to the variables in any single vector.
    
	\begin{lemma}\label{lem:subfunclarged}
          Let $d \geq c\log n$ for any $c > 1$. Consider an instance
          of $\Int(\mcA,\mcB)$. Let $a_i$ denote the $i$th
          vector in $A$ and $b_i$ denote the $i$th vector in $B$. Let
          $s_i$ ($t_i$) denote the number of distinct sub-functions of
          $\Int(\mcA,\mcB)$ with respect to $a_i$ ($b_i$
          respectively). Then the following inequalities hold:
          \[ \left(\frac{2^{d-1}}{n\sqrt{d}}\right)^n \le  s_i, t_i \le 2^{2nd} \]
	\end{lemma}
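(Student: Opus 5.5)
The upper bound is immediate from counting. A sub-function of $\Int(\mcA,\mcB)$ with respect to $a_i$ is determined by an assignment to all the remaining variables. Those are the other $n-1$ vectors of $\mcA$ and the $n$ vectors of $\mcB$, for a total of $(2n-1)d\le 2nd$ bits. So $s_i\le 2^{2nd}$, and by symmetry the same bound holds for $t_i$.

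For the lower bound, fix $i$ and set every $a_{i'}$ with $i'\neq i$ to $1^d$. These vectors intersect every nonempty $b_j$, and we will only assign nonempty vectors to $\mcB$. The restricted function on $a_i$ is then
\[
f_X(a_i)=1 \iff \forall x\in X,\ a_i\cap x\neq\emptyset,
\]
where $X$ is the set of vectors placed in $\mcB$. The multiset $X$ has at most $n$ distinct elements, and repetitions are allowed. The point, as explained in the techniques section, is not to fix a universe in advance. We will instead count how many different restricted functions arise as $X$ ranges over suitable families.

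The key observation concerns antichains of nonempty sets. For a family $X$ that is an antichain of nonempty subsets of $[d]$, the function $f_X$ is a monotone CNF whose clauses are the sets in $X$. Its minimal $0$-inputs are exactly the complements $\overline{x}$ for $x\in X$: $a_i$ is a $0$-input iff $a_i\subseteq\overline{x}$ for some $x\in X$, and the sets $\overline{x}$ again form an antichain. Hence $X$ can be recovered from $f_X$, so distinct antichains give distinct sub-functions.

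It remains to count antichains of size at most $n$ (padded to exactly $n$ by repetition). Take all subsets of $X$ drawn from the middle layer $\binom{[d]}{\lfloor d/2\rfloor}$, each of size exactly $n$. The middle layer has size $N=\binom{d}{\lfloor d/2\rfloor}\ge 2^{d}/\sqrt{2d}\ge 2^{d-1}/\sqrt d$. The number of such subsets is
\[
\binom{N}{n}\ge (N/n)^n\ge\left(\frac{2^{d-1}}{n\sqrt d}\right)^n.
\]
This is the claimed bound, and since $d\ge c\log n$ with $c>1$ it is nontrivial (and $N\ge n$, so size-$n$ subsets exist). The argument for $t_i$ is symmetric, with the roles of $\mcA$ and $\mcB$ swapped.

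The main obstacle is the injectivity step: making sure that the restriction (filling $\mcA$ with $1^d$) does not collapse distinct choices of $X$. This is handled by the antichain and minimal-$0$-input argument. The only other point to check is the central binomial estimate $\binom{d}{d/2}\ge 2^{d-1}/\sqrt d$, which is standard.
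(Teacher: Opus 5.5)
Your proof is correct and follows the paper's argument essentially step for step. Like the paper, you set the other vectors of the first list to $\vec{1}$, put $n$ distinct middle-layer vectors in $\mcB$, recover those vectors from the resulting CNF (the paper phrases this via maxterms), count $\binom{|\Hd|}{n}\ge(|\Hd|/n)^n$, and bound the total number of restrictions by $2^{(2n-1)d}$. Two small slips, neither affecting the result: the complements $\overline{x}$ are the \emph{maximal} (not minimal) $0$-inputs of $f_X$, and the intermediate estimate $\binom{d}{\lfloor d/2\rfloor}\ge 2^d/\sqrt{2d}$ fails for small odd $d$ (e.g.\ $d=3$), though the bound $2^{d-1}/\sqrt{d}$ you actually use holds for all $d$.
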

	\begin{proof}
          We will show the statement for the number of distinct
          sub-functions $s_1$ of $\Int$ with respect to $a_1$. A
          similar argument holds for each $s_i$ and $t_i$.

          Suppose we set all the vectors in $\mcA$ other than
          $a_1$ to $\vec{1}$, and set the vectors in $\mcB$ to non-zero vectors. \srikanth{Here, setting $a_2 = \vec{1}$ doesn't automatically set the corresponding DNF to $1$. That happens only because all the $b_j$s are set to non-zero vectors, which happens later. So we should rephrase?} \karteek{Rephrased slightly!!}\srikanth{Perfect} Then each of $a_2,\ldots, a_n$ will trivially intersect each vector in $\mcB$ leaving us with the following sub-function on $a_1$: 
          $\bigwedge_{i = 1}^n \bigvee_{j = 1}^d (a_{1j} \wedge b_{ij})$.  
          Thus, setting $b_1,\ldots, b_n$ to non-zero vectors
          results in a monotone CNF on the variables of $a_1$.
          
          \textbf{Lower bound.}
          Let $\Hd$ be the set of all $d$-dim Boolean
          vectors having exactly $d/2$ many ones. We claim that if we
          set each $b_i \in \mcB$ to a distinct vector from $\Hd$, then
          each way of setting the $b_i$s results in distinct
          sub-functions. Observe that if the $b_i$s have exactly $d/2$
          many $1$s, then each clause in the CNF above has exactly
          $d/2$ positive literals, and these clauses encode exactly the maxterms of the corresponding sub-function.\srikanth{Rephrased slightly here.} 
          Hence each such distinct CNF is a distinct
          sub-function of $\Intnd$ with respect to $a_1$. Therefore
          the number of sub-functions is at least the number of ways
          to choose $n$ vectors from $\Hd$.

          Since $d = c\log n$ and $c > 1$, for large $n$, we have:
          \[ |\Hd| = \binom{d}{d/2}\geq \frac{2^{d}}{2\sqrt{d/2}} = \frac{n^c}{\sqrt{2c\log n}} > n \]

          Thus, the number of distinct subfunctions is:
          \[ s_1\ge \binom{|\Hd|}{n} \geq \binom{2^{d}/2\sqrt{d/2}}{n} \geq
          \left(\frac{2^{d}}{ n \sqrt{2d}}\right)^n \geq \left(\frac{2^{d-1}}{n \sqrt{d}}\right)^n  \]

        \textbf{Upper bound.}
        To upper bound the number of distinct sub-functions on $a_1$,
        it suffices to bound the number of different settings to the
        vectors $a_2,\ldots, a_n, b_1,\ldots, b_n$. This is at most
        $2^{(2n-1)d} \le 2^{2nd}$.
        \end{proof}
    \karteek{Should we merge these two Lemmas?}
        \begin{lemma}
          \label{lem:subfuncsmalld}
          Let $d \le \log n$. The following inequalities hold for the
          number of distinct sub-functions $s_i$ of $\Intnd$ with
          respect to any $a_i$ or $b_i$:
          \[ 2^{\binom{d}{d/2}} \le s_i \le 2^{2nd} \]
        \end{lemma}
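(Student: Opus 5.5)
The plan mirrors the proof of Lemma~\ref{lem:subfunclarged}, but with the supply of distinct restrictions now limited by $|\Hd|$ instead of $n$. By symmetry it suffices to treat $a_1$; the argument for any other $a_i$, or for any $b_i$, is identical after exchanging the roles of $\mcA$ and $\mcB$.

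\textbf{Upper bound.} The bound $s_i\le 2^{2nd}$ follows exactly as before. Every sub-function on $a_1$ is determined by the assignment to the remaining $2n-1$ vectors, and there are at most $2^{(2n-1)d}\le 2^{2nd}$ such assignments.

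\textbf{Lower bound.}
\begin{enumerate}
\item Set $a_2,\ldots,a_n$ to $\vec{1}$ and every $b_j$ to a non-zero vector. Then $a_2,\ldots,a_n$ intersect every $b_j$, and the sub-function on $a_1$ is the monotone CNF $\bigwedge_{j=1}^n\bigvee_{k\in b_j} a_{1k}$.
\item Since $d\le\log n$, we have $|\Hd|=\binom{d}{d/2}\le 2^d\le n$. So for any non-empty $W\subseteq\Hd$ we can choose $b_1,\ldots,b_n$ to range exactly over $W$, repeating vectors as needed. Repeated clauses do not change the CNF, so the sub-function is $\bigwedge_{w\in W}\bigvee_{k\in w}a_{1k}$.
\item For distinct $W$ these functions are distinct. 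Every clause has exactly $d/2$ literals, so no clause absorbs another, and the clauses are precisely the minimal maxterms (the minimal sets whose zeroing forces the output to $0$). This is the same observation used in Lemma~\ref{lem:subfunclarged}, and it shows the function determines $W$.
\item This yields $2^{\binom{d}{d/2}}-1$ distinct sub-functions. Adding one more distinct sub-function, for instance the constant $1$ obtained by setting every $b_j=\vec{1}$, gives $s_i\ge 2^{\binom{d}{d/2}}$.
\end{enumerate}

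The one point requiring care is step 2. We must check that $n\ge|\Hd|$, so that every subset $W$ is realizable. We also need $W\neq\emptyset$, because the $b_j$ must be non-zero vectors, which is why the count must be patched by the extra function in step 4. If $d$ is odd, we use $\lfloor d/2\rfloor$ in place of $d/2$ and the argument is unchanged.
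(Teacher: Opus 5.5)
Your argument takes the same route as the paper. You restrict to the monotone CNF $\bigwedge_j\bigvee_{k\in b_j}a_{1k}$ and use $2^d\le n$ so that there are enough $b_j$'s. Repeating vectors is allowed; the paper itself sets $a_2,\ldots,a_n$ all to $\vec 1$. You then count via subsets of the middle slice. The paper phrases this as ``every monotone function on $a_1$ is a sub-function, and there are at least $2^{\binom{d}{d/2}}$ slice functions.'' You instead realize the middle-slice CNFs directly and check distinctness via minimal maxterms, which is correct.

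There is one false statement, in step 4. Setting every $b_j=\vec 1$ (with $a_2,\ldots,a_n=\vec 1$) gives $\bigvee_{k=1}^{d}a_{1k}$, not the constant $1$. In fact the constant $1$ is never a sub-function on $a_1$: $a_1=\vec 0$ is disjoint from $b_1$ and forces $\Intnd=0$, whatever the other vectors are.

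The count is easily repaired by taking the constant $0$ instead, obtained by setting some $b_j=\vec 0$. It differs from every CNF with non-empty $W\subseteq\Hd$, because those evaluate to $1$ at $a_1=\vec 1$. For $d\ge 2$, the function $\bigvee_k a_{1k}$ you actually obtained also works: it differs from the CNF for $W$ at $a_1=\chi_{\overline{w}}$ for any $w\in W$. Either way you get the required $2^{\binom{d}{d/2}}$ distinct sub-functions.

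For odd $d$, use $\lceil d/2\rceil$ rather than $\lfloor d/2\rfloor$. With the floor, the case $d=1$ gives a slice consisting only of $\vec 0$, which your non-zero $b_j$'s cannot use.
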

        \begin{proof}
          As described in proof of Lemma \ref{lem:subfunclarged}, we
          can set the $b_i$s to obtain a monotone CNF with clauses on
          variables of $a_1$. Observe that since $d\le \log n$, there
          are at most $2^d \le n$ many possible monotone
          clauses. Thus, we have sufficiently many $b_i$ to encode any
          monotone CNF on the variables of $a_i$. Hence, \emph{every}
          monotone function on the $d$ variables of $a_1$ are
          sub-functions of $\Intnd$ with respect to $a_1$. The number
          of monotone functions on $d$ variables is at least the
          number of slice functions on the $d/2$'th slice. This is
          $2^{\binom{d}{d/2}}$.

          The upper bound is exactly as in proof of Lemma \ref{lem:subfunclarged}.
        \end{proof}

        \begin{proof}[Proof of Theorem \ref{thm:formulaBPLowerBound}]
          Partition the $2nd$ variables into parts that correspond to
          each vector. i.e., for all $i\in [n]$, the part $Y_i = a_i$ and for $j\in \{n+1,\ldots, 2n\}$,
          the part $Y_j = b_j$.

          The case of $d\ge c\log n$ for $c>1$ is as follows. Combining Lemma \ref{lem:subfunclarged} and Theorem~\ref{thm:Nechiporuk}, we obtain the following.
	\begin{align*}
		L(\Intnd) &\geq \frac{1}{4} \sum_{i = 1}^{2n} \log \left(\frac{2^{d-1}}{n\sqrt{d}}\right)^n = \frac{n^2}{2}(d-1 - (\log n+1/2\log d)) \in \Omega(n^2d)\\ 
        BP(\Intnd) &\geq \epsilon \sum_{i = 1}^{2n}\frac{\log\left(\frac{2^{d-1}}{n\sqrt{d}}\right)^n}{\log\log 2^{2nd}}
              = \epsilon\, 2n \frac{n(d-1-(\log n + 1/2\log d))}{\log 2nd}\in \Omega\left(\frac{n^2d}{\log(nd)}\right)
      \end{align*}

        For the case of $d\le \log n$, we use Lemma \ref{lem:subfuncsmalld} in Theorem \ref{thm:Nechiporuk}:        \begin{align*}
          L(\Intnd) &\geq \frac{1}{4} \sum_{i = 1}^{2n} \log 2^{\binom{d}{d/2}} = \frac{n}{2}\binom{d}{d/2} \ge \frac{n2^d}{4\sqrt{d/2}} \in \Omega\left(\frac{n2^d}{\sqrt{d}}\right)\\
            BP(\Intnd) &\geq \epsilon \sum_{i = 1}^{2n} \frac{\log 2^{\binom{d}{d/2}}}{\log\log 2^{2nd}}\in \Omega\left(\frac{n 2^d}{\sqrt{d}\log(2nd)}\right)
      \end{align*}
      
      \end{proof}

        \begin{remark} 
            We note that Kane and Williams \cite{kane2019orthogonal} established a  lower bound of $L(\Intnd) \geq \Omega\left(\min(\frac{n^2}{\log d},\frac{n2^d}{\sqrt{d}\log d})\right)$  for formulas computing $\Intnd$ and a lower bound of $BP(\Intnd) \geq \Omega\left(\min(\frac{n^2}{\log d \log(nd)},\frac{n2^d}{\sqrt{d}\log d \log(nd)}) \right)$ for branching programs computing $\Intnd$. For $c > 1$ and $d \geq c\log n$ our result improves this bound for both formulas and branching programs by a factor of $d \log d$. When $d \leq \log n$ the improvement is by a factor of $\log d$ in both the models. 
            \karteek{Tameem, complete this for $d\le \log n$, and compare with our theorem}
        \end{remark}
        \tameem{Done.}
              \section{Monotone Upper Bounds for $\Intnd$}\srikanth{Remove this section as it talks about $k$-OV?}\karteek{Since the main theorem is a monotone lower bound, I think it makes sense to keep a monotone upper bound. But we could downgrade this from kInt to 2Int?}
              \srikanth{Sounds fine to downgrade to 2Int.}
              \karteek{"Downgrade" complete!}
In
this section, we show a construction of monotone circuits to compute
$\Int$.  We first need the following characterization of $0$-inputs of
$\Intnd$.
              \label{sec:upperbounds}
              \begin{lemma} \label{lem:zeroinput} Let
                $x = (\mcA,\mcB)$ be an input of $\Intnd$. Then
                $\Intnd(x) = 0$ if and only if there exists an ordered
                partition $\mathcal{P} = (P_1, P_2)$ of $[d]$ into two
                parts such that there exists $a \in \mcA, b\in \mcB$
                satisfying $a \cap P_1 = \emptyset$ and
                $b\cap P_2 = \emptyset$
              \end{lemma}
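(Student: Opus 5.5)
The proof is a direct unpacking of the definition of $\Intnd$ in both directions, taking $P_1$ to be a set containing $a$ and $P_2$ its complement. We view vectors as subsets of $[d]$. By Definition~\ref{def:Int}, $\Intnd(x)=0$ holds exactly when some $a\in\mcA$ and $b\in\mcB$ satisfy $a\cap b=\emptyset$. So it suffices to show that a disjoint pair $(a,b)$ exists if and only if an ordered partition $(P_1,P_2)$ of $[d]$ and a pair $a\in\mcA$, $b\in\mcB$ exist with $a\cap P_1=\emptyset$ and $b\cap P_2=\emptyset$.

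For the forward direction, suppose $a\cap b=\emptyset$. We set $P_2:=a$ and $P_1:=[d]\setminus a$. Then $a\cap P_1=\emptyset$ holds trivially. Since $b$ is disjoint from $a$, we have $b\subseteq [d]\setminus a = P_1$, so $b\cap P_2=\emptyset$. The pair $(P_1,P_2)$ is an ordered partition of $[d]$.

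For the reverse direction, suppose such $P_1,P_2,a,b$ exist. Because $P_1\cup P_2=[d]$ and $a\cap P_1=\emptyset$, we get $a\subseteq P_2$. Likewise $b\cap P_2=\emptyset$ gives $b\subseteq P_1$. Since $P_1\cap P_2=\emptyset$, it follows that $a\cap b\subseteq P_2\cap P_1=\emptyset$, so $a$ and $b$ are orthogonal and $\Intnd(x)=0$.

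The only point needing care is the convention on partitions: whether a part is allowed to be empty (for instance when $a=[d]$ or $a=\emptyset$). I would state explicitly that parts may be empty. If nonempty parts were required, the boundary cases $a=\emptyset$ or $b=\emptyset$ would need a separate argument, so allowing empty parts is the cleaner choice. Apart from this convention, there is no real obstacle.
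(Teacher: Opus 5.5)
Your proof is correct and follows the paper's argument: the paper puts $j$ into $P_1$ exactly when $a_j=0$ and into $P_2$ otherwise, which gives the same partition $P_1=[d]\setminus a$, $P_2=a$ that you use, and its reverse direction is the same coordinatewise check. One small slip: your opening sentence swaps the roles (it is $P_2$, not $P_1$, that equals $a$), but the body of your argument has them the right way round.
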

          \begin{proof}
            ``$\implies$'': Let $x = (\mcA,\mcB)$ be a $0$-input of
            $\Intnd$. From Definition \ref{def:Int} there must exist
            $a \in \mcA,b\in\mcB$ such that $a\cap b =
            \emptyset$. Hence for all $j \in [d]$, either $a_j = 0$ or
            $b_j = 0$ or both.  For every $j \in [d]$, we include $j$ in the
            set $P_1$ if $a_j = 0$, and in $P_2$ if
            $b_j=0$. If both $a_j$ and $b_j$ are $0$, then we include
            $j$ in only $P_1$. Therefore each $j \in [d]$ belongs to
            exactly one of $P_1$ and $P_2$. This results is an ordered
            partition of $[d]$ into two parts namely $P_1$ and
            $P_2$. It is now straightforward to observe that
            $\mathcal{P} = (P_1,P_2)$ satisfies the claim.
            
            ``$\impliedby$'': Let $x = (\mcA,\mcB)$ be an input of
            $\Intnd$. Suppose $\mathcal{P} = (P_1,P_2)$ is an ordered
            partition of $[d]$ into two parts such that
            $a \cap P_1 = \emptyset$ and $b\cap P_2 =
            \emptyset$. Then, consider any $j \in [d]$.  Since
            $\mathcal{P}$ is a partition of $[d]$, either $j\in P_1$
            or $j\in P_2$. If $j\in P_1$, then $a_j=0$, else $j\in P_1$ and $b_j=0$.
            Hence $x$ is a  $0$-input of $\Intnd$.
          \end{proof}
	\begin{theorem} \label{thm:ub} The function
          $\Intnd$ can be computed by a monotone circuit of size at
          most $O(\min\{2^dn d, n^{2}d\})$.
	\end{theorem}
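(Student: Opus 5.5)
The plan is to give two monotone constructions and take the smaller one. In both, the circuit is built for the negation of $\OVnd$ directly: it computes $\Intnd$ as an $\AND$ over all the ways the input could be a $0$-input, each such way being certified by a monotone $\OR$. The key observation is that a pair $a,b$ is non-orthogonal exactly when $\bigvee_{j\in[d]}(a_j\wedge b_j)=1$.

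\textbf{Brute force, size $O(n^2d)$.} Since $\Intnd(\mcA,\mcB)=\bigwedge_{a\in\mcA,b\in\mcB}\bigvee_{j=1}^d (a_j\wedge b_j)$, each of the $n^2$ inner disjunctions uses $d$ $\AND$ gates and $d-1$ $\OR$ gates. The outer conjunction adds $n^2-1$ more gates. The total is $O(n^2d)$, and the circuit is visibly monotone and correct by Definition~\ref{def:Int}.

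\textbf{Partition-based construction, size $O(2^d nd)$.} This uses Lemma~\ref{lem:zeroinput}: the input is a $0$-input if and only if there is an ordered partition $(P_1,P_2)$ of $[d]$ together with $a\in\mcA$ and $b\in\mcB$ such that $a\cap P_1=\emptyset$ and $b\cap P_2=\emptyset$. Negating this characterization gives
\[
\Intnd(\mcA,\mcB)=\bigwedge_{(P_1,P_2)}\Bigl(\bigwedge_{a\in\mcA}\bigvee_{j\in P_1} a_j \;\vee\; \bigwedge_{b\in\mcB}\bigvee_{j\in P_2} b_j\Bigr).
\]
This formula is monotone, and it is correct precisely by the lemma. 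Let us make the equivalence explicit. For a fixed partition, the bracket is $0$ exactly when some $a$ avoids $P_1$ and some $b$ avoids $P_2$. The whole formula is therefore $0$ exactly when such a partition exists, and by Lemma~\ref{lem:zeroinput} that happens exactly when $\Intnd=0$.

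The size bound then follows by counting. There are $2^d$ ordered partitions, since $P_1$ determines $P_2$. For each partition, the two inner parts take $O(nd)$ gates: $n$ $\OR$s of fan-in at most $d$ on each side, plus two $\AND$s of $n$ terms each. This gives a total of $O(2^d nd)$. It is also a formula, which matches the remark in the introduction.

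Taking whichever of the two constructions is smaller gives $O(\min\{2^dnd,n^2d\})$. There is no real obstacle here. The only point needing care is checking that the negation of Lemma~\ref{lem:zeroinput} produces exactly the displayed conjunction, with no negated variables. This holds because the conditions ``$a\cap P_1\neq\emptyset$'' and ``$b\cap P_2\neq\emptyset$'' are themselves positive $\OR$s of input bits.
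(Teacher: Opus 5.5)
Your proposal is correct and is essentially the paper's proof. It uses the same brute-force circuit $\bigwedge_{a\in\mcA,b\in\mcB}\bigvee_{j\in[d]}(a_j\wedge b_j)$ of size $O(n^2d)$, and the same conjunction over all $2^d$ ordered partitions (empty parts allowed, as Lemma~\ref{lem:zeroinput} requires) of $\bigl(\bigwedge_{a}\bigvee_{j\in P_1}a_j\bigr)\vee\bigl(\bigwedge_{b}\bigvee_{j\in P_2}b_j\bigr)$, each of size $O(nd)$, with your explicit check that each bracket vanishes exactly when some $a$ avoids $P_1$ and some $b$ avoids $P_2$ spelling out the correctness step the paper leaves implicit.
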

 
		\begin{proof}
                  We will construct two circuits $C_1$, and $C_2$,
                  computing $\Intnd$ with sizes $O(2^dnd)$, and
                  $O(n^2d)$ respectively.
		
		The circuit $C_1$ is a brute-force through all
                possible ordered partitions to check if any of them
                satisfy Lemma \ref{lem:zeroinput}. Let $\mathcal{S}$
                be the set of all ordered partitions of $[d]$ into two
                parts. The following circuit outputs $0$ if for some
                fixed ordered partition
                $\mathcal{P} = (P_1,P_2)$, there exists
                $a \in \mcA, b\in \mcB$ such that
                $a \cap P_1 = \emptyset$ and $b\cap P_2 = \emptyset$.
		\begin{align*}
                  C_{\mathcal{P}} = \left(\bigwedge_{a \in \mcA} \bigvee_{j \in P_1} a_{j}\right)\bigvee \left(\bigwedge_{b \in \mcB} \bigvee_{j \in P_2} b_{j}\right)
		\end{align*}
		Finally to check if there exists an ordered partition
                $\mathcal{P}$ such that $C_{\mathcal{P}}$ outputs $0$
                we $\AND$ over all ordered partitions of $[d]$ into
                two parts.
		\begin{align*}
			C_1 = \bigwedge_{\mathcal{P} \in \mathcal{S}} C_{\mathcal{P}}
		\end{align*}
                Here each $C_{\mathcal{P}}$ is of size at most $O(nd)$
                and $|S| = O(2^d)$. Therefore size of $C$ is at most
                $O(2^dnd)$.
	
	The circuit $C_2$ is a brute-force search to match Definition
        \ref{def:Int}:
	\begin{align*}
          C_2 = \bigwedge_{a\in \mcA, b\in \mcB}\:\: \bigvee_{j \in [d]}\:\: a_j\wedge b_j
	\end{align*}  
	Since, each $|A|=|B|=n$, the size of $C$ is $O(n^2d)$.
	\end{proof}

    \section{Jukna's criterion for $\Intnd$}
\label{sec:Jukna}
    
            Jukna \cite{juknabook} (Section 9.4) gives a general criterion to obtain lower bounds for monotone circuits based on the monotone switching lemma of Berg and Ulfberg~\cite{BergUlfberg}. This criterion is as follows.

            \begin{theorem}
                \label{thm:Jukna}
                Assume that $f$ is a monotone Boolean function on $n$ variables and has monotone circuits of size $t$. Then, for any $2\leq r, s \leq n$, there is an exact $s$-CNF $\varphi$, an exact $r$-DNF $\psi$ and a subset $X'$ of the variables of size at most $s-1$ such that
                \begin{itemize}
                    \item $\varphi$ has at most $t\cdot (r-1)^s$ clauses and $\psi$ has at most $t\cdot (s-1)^r$ terms,\footnote{An \emph{Exact $s$-CNF} is one where every clause has exactly $s$ distinct variables. Exact $r$-DNFs are defined similarly.} and
                    \item $\varphi\leq f$ or $f\leq \psi\vee \bigvee_{x\in X'} x$.
                \end{itemize}
            \end{theorem}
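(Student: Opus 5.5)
The plan is to follow the classical Berg--Ulfberg route. First prove a monotone switching lemma. Then run it gate by gate through the circuit, keeping a two-sided approximator whose errors are collected into one exact $s$-CNF and one exact $r$-DNF.

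\emph{The switching lemma.} The statement I would prove is: for every monotone $(s-1)$-CNF $C$ there are
\begin{itemize}
\item an $(r-1)$-DNF $D$,
\item an exact $r$-DNF $\psi_0$ with at most $(s-1)^r$ terms,
\item an exact $s$-CNF $\varphi_0$ with at most $(r-1)^s$ clauses,
\end{itemize}
such that $\varphi_0\wedge D\le C\le D\vee\psi_0$. The dual statement, from an $(r-1)$-DNF to an $(s-1)$-CNF, follows by symmetry (Observation~\ref{obs:boolean-dual}).

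I would prove it by a branching argument. Expand $C$ into a DNF along a decision-tree-like procedure: repeatedly pick an unsatisfied clause, of width at most $s-1$, and branch on which of its variables to set. Every branch that grows to length $r$ is cut off and charged to an exact $r$-term of $\psi_0$. Since each step has at most $s-1$ choices, there are at most $(s-1)^r$ such terms. The surviving branches have length at most $r-1$ and form $D$. For the lower error, I would apply the symmetric argument to $D$: the minterms of $D$ that fail to imply $C$ are controlled by at most $(r-1)^s$ exact $s$-clauses. I expect this counting, in particular obtaining exactly the $(s-1)^r$ and $(r-1)^s$ bounds with exact widths, to be the main technical obstacle. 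I would model it on the standard Berg--Ulfberg tree argument.

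\emph{Induction over the circuit.} For each gate $g$ I maintain an approximator $\tilde g$ that is an $(s-1)$-CNF. At an $\AND$ gate, the conjunction of two $(s-1)$-CNFs is again an $(s-1)$-CNF, so no error arises. At an $\OR$ gate, I first switch both inputs into $(r-1)$-DNFs and take their disjunction, which is exact. I then switch back to an $(s-1)$-CNF. Each such gate contributes at most $(s-1)^r$ exact $r$-terms to the upper error and at most $(r-1)^s$ exact $s$-clauses to the lower error. Taking unions over the $t$ gates gives $\varphi$ and $\psi$ with the claimed sizes. By induction on the gates,
\[
\varphi\wedge\tilde F\le f\le\tilde F\vee\psi,
\]
where $\tilde F$ is the output approximator. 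The induction uses only monotonicity of $\wedge$ and $\vee$ to propagate the error terms.

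\emph{Conclusion.} If $\tilde F\equiv1$, then $\varphi\le f$. Otherwise $\tilde F$ is a nonconstant $(s-1)$-CNF, so it has some clause of width at most $s-1$. Let $X'$ be the variables of that clause. Then $\tilde F\le\bigvee_{x\in X'}x$, and hence $f\le\psi\vee\bigvee_{x\in X'}x$. The case $\tilde F\equiv0$ is covered by taking the empty clause, with $X'=\emptyset$. This gives the dichotomy in Theorem~\ref{thm:Jukna}.
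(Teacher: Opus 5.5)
The paper does not prove this theorem itself. It quotes it from Jukna's book, where it is derived from the Berg--Ulfberg switching lemma along the lines you sketch: decision-tree expansion of a CNF, gate-by-gate CNF approximators, and the final dichotomy. Your invariant $\varphi\wedge\tilde F\le f\le\tilde F\vee\psi$ and your concluding case analysis are fine.

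The gap is in the accounting that is supposed to give the exact bounds $t(r-1)^s$ and $t(s-1)^r$. In your scheme an $\OR$ gate performs three switchings: one CNF$\to$DNF switch for each of its two inputs, and one DNF$\to$CNF switch back. With your two-sided lemma, each switching contributes up to $(s-1)^r$ terms and up to $(r-1)^s$ clauses. So a single $\OR$ gate can contribute $3(s-1)^r$ terms and $3(r-1)^s$ clauses. Your sentence ``each such gate contributes at most $(s-1)^r$ exact $r$-terms \dots\ and at most $(r-1)^s$ exact $s$-clauses'' is therefore false. The argument as written only yields $3t(s-1)^r$ and $3t(r-1)^s$.

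Two observations repair this.
\begin{itemize}
\item \emph{The switching lemma is one-sided.} Every surviving branch of your tree hits every clause of $C$, so it is an implicant of $C$. Hence $D\le C\le D\vee\psi_0$ holds with no $\varphi_0$ at all, and you can drop the ``symmetric argument'' for the lower error. Dually, switching an $(r-1)$-DNF $h$ gives an $(s-1)$-CNF $E$ with $h\le E$ and $\varphi_0\wedge E\le h$. So upper errors arise only from CNF$\to$DNF switches, and lower errors only from DNF$\to$CNF switches.
\item \emph{Do the CNF$\to$DNF switch once per gate, not once per use.} After computing the $(s-1)$-CNF $\tilde g$ at gate $g$, compute a single $(r-1)$-DNF $\hat g$ with $\hat g\le\tilde g\le\hat g\vee\psi_g$. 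Reuse $\hat g$ at every $\OR$ gate fed by $g$. It inherits the invariant, since $\varphi\wedge\hat g\le\varphi\wedge\tilde g\le g\le\tilde g\vee\psi\le\hat g\vee\psi$.
\end{itemize}
Now each gate triggers at most one switch in each direction, and input variables need none. This gives at most $t(s-1)^r$ terms in $\psi$ and at most $t(r-1)^s$ clauses in $\varphi$, as required. A constant-factor loss would not matter for how the paper uses the criterion, but it does not establish the statement as written.
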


            To use the above for a lower bound for $\Intnd$, we need to choose a suitable $r,s$ so that the conclusion is not true for $\Intnd.$ Unfortunately, this is not possible as long as $c_0\log n \leq d = n^{o(1)}$ for a large enough absolute constant $c_0.$ We argue this below.

            Our argument rests on the following  claims.

            \begin{claim}
            \label{clm:rsmall}
                For any $2\leq r \leq n,$ there is an exact $r$-DNF $\psi_0$ with at most $d^r$ terms such that $\Intnd \leq \psi_0.$ \nutan{Do we need to explain this $\leq$?}
            \end{claim}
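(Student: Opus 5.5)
Since $\Intnd(\mcA,\mcB)=1$ forces every pair $(a,b)$ to share a coordinate, in particular the pair $(a_1,b_1)$ does. This gives
\[
\Intnd(\mcA,\mcB)\;\leq\;\bigvee_{j\in[d]} (a_{1j}\wedge b_{1j}),
\]
which is a DNF with $d$ terms of width $2$. The plan is to turn this into an \emph{exact} $r$-DNF by adding $r-2$ further witnesses, one from each of the pairs $(a_1,b_2),(a_1,b_3),\ldots$, and counting terms.

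Concretely, for each tuple $(j_1,\ldots,j_{r-1})\in[d]^{r-1}$ we take the term
\[
T_{j_1,\ldots,j_{r-1}} := a_{1j_1}\wedge b_{1j_1}\wedge b_{2j_2}\wedge\cdots\wedge b_{(r-1)j_{r-1}},
\]
and set $\psi_0:=\bigvee_{(j_1,\ldots,j_{r-1})} T_{j_1,\ldots,j_{r-1}}$. The variables $a_{1j_1},b_{1j_1},b_{2j_2},\ldots,b_{(r-1)j_{r-1}}$ all come from different vectors, so they are distinct. Hence each term has exactly $r$ variables, and there are $d^{r-1}\le d^r$ terms.

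To check $\Intnd\le\psi_0$, take any $1$-input $(\mcA,\mcB)$. Then $a_1\cap b_1\ne\emptyset$, so we can pick $j_1\in a_1\cap b_1$. For each $i=2,\ldots,r-1$ we also have $a_1\cap b_i\neq\emptyset$, so $b_i$ is nonempty and we can pick $j_i\in b_i$. The term $T_{j_1,\ldots,j_{r-1}}$ is then satisfied.

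The one obstacle is that this uses $r-1$ distinct vectors of $\mcB$, so it needs $r-1\le n$. If $r-1>n$ (possible only when $r=n+1$, which is excluded since $r\le n$, or if one insists on using only $\mcB$), we would instead fill the remaining slots with coordinates of $a_2,\ldots,a_{n}$. Since $r\le n$ and there are $2n$ vectors in total, enough fresh vectors are always available. Distinctness of the variables within each term is the only thing to watch, and the paper's convention that exact DNFs use distinct variables is what forces this choice.
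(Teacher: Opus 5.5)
Your construction is correct and is essentially the paper's argument. The paper takes $\psi_0=\bigvee_{i_1,\ldots,i_r\in[d]} a_{1,i_1}\wedge\cdots\wedge a_{r,i_r}$, which only records that $a_1,\ldots,a_r$ are nonempty (forced on $1$-inputs), whereas you swap one nonemptiness witness for an intersection witness for $(a_1,b_1)$ and draw the rest from $b_2,\ldots,b_{r-1}$, saving a factor of $d$ in the term count; your closing paragraph about running out of vectors is moot, since $r\le n$ already gives $r-1<n$.
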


             \begin{claim}
            \label{clm:ssmall}
                For any $s <  d/2,$ there is an exact $s$-CNF $\varphi_1$ with at most $nd\cdot 2^{O(s)}$ clauses such that $\varphi_1\leq \Intnd.$
            \end{claim}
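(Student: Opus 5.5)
The plan is to build $\varphi_1$ explicitly and make it as restrictive as possible, using a single small block of coordinates. Recall what $\varphi_1\leq \Intnd$ means: every assignment satisfying $\varphi_1$ must be a $1$-input of $\Intnd$, so every $a\in\mcA$ must intersect every $b\in\mcB$. The simplest way to force this is a pigeonhole argument on one block. Since $s<d/2$, we can fix a set $W\subseteq[d]$ with $|W| = 2s\le d$. If every vector (from either list) has at least $s+1$ ones inside $W$, then any $a\in\mcA$ and $b\in\mcB$ intersect inside $W$, because $(s+1)+(s+1) > 2s = |W|$.

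Next, express the condition ``vector $w$ has at least $s+1$ ones in $W$'' as an exact $s$-CNF. A vector $w$ has at most $s$ ones in $W$ if and only if some $s$-subset $T\subseteq W$ consists entirely of zeros of $w$. Hence ``at least $s+1$ ones in $W$'' is equivalent to
\[
\bigwedge_{T\in\binom{W}{s}} \ \bigvee_{j\in T} w_j .
\]
Each clause here has exactly $s$ distinct variables, and there are $\binom{2s}{s}\le 4^s$ clauses.

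Define $\varphi_1$ as the conjunction of these threshold CNFs over all $2n$ vectors $a_1,\ldots,a_n,b_1,\ldots,b_n$. This is an exact $s$-CNF with at most $2n\cdot 4^s \le nd\cdot 2^{O(s)}$ clauses. By the pigeonhole argument of the first paragraph, $\varphi_1(\mcA,\mcB)=1$ forces every pair $a\cap b\neq\emptyset$, so $\varphi_1\leq\Intnd$.

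No step is a real obstacle. The only point needing care is the exactness requirement, namely that each clause has exactly $s$ distinct variables. This is automatic because we take $s$-subsets of a block of size $2s$, and it is precisely where the hypothesis $s<d/2$ (which guarantees $2s\le d$) is used.
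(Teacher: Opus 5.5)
Your proof is correct, and it takes a different and cleaner route than the paper's.

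\textbf{The paper's route.} The paper builds, for each vector, an exact $s$-CNF $\varphi'$ on all $d$ coordinates whose satisfying assignments all have Hamming weight greater than $d/2$. Any two satisfying vectors then intersect somewhere in $[d]$. No explicit small CNF for this majority-type condition is available, so the paper takes $T$ random $s$-clauses. It uses the estimate $\binom{d/2}{s}/\binom{d}{s}\geq 2^{-O(s)}$ for a single clause killing a fixed light assignment. A union bound over the at most $2^d$ assignments of weight at most $d/2$ then forces $T = d\cdot 2^{O(s)}$, and this is where the factor $d$ in the clause count comes from.

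\textbf{Your route.} You localize the pigeonhole argument to a block $W$ of size $2s$ and require weight at least $s+1$ there. That threshold condition is captured exactly by its $\binom{2s}{s}$ maxterm clauses. The resulting construction:
\begin{itemize}
\item is deterministic and explicit, with no probabilistic estimate;
\item has $2n\binom{2s}{s}\leq 2n\cdot 4^s$ clauses, so the factor $d$ disappears entirely;
\item works under the weaker hypothesis $s\leq d/2$.
\end{itemize}

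\textbf{Consequence.} The claim is used only to show that the number of clauses is at most $n^{1+o(1)}(r-1)^s$ in the argument about Jukna's criterion. Your stronger bound therefore slots in directly. For this purpose the paper's global ``weight $> d/2$'' condition gains nothing over your block-local version.
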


            \begin{claim}
            \label{clm:slarge}
                For any $n,d$, $\Intnd$ can be written as an exact $d$-CNF $\varphi_0$ with at most $n^2\cdot 2^{O(d)}$ clauses.
            \end{claim}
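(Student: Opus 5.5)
The statement to prove is the last claim, Claim~\ref{clm:slarge}: $\Intnd$ can be written as an exact $d$-CNF with at most $n^2\cdot 2^{O(d)}$ clauses. The plan is to start from the brute-force formula $C_2$ in the proof of Theorem~\ref{thm:ub},
\[
\Intnd = \bigwedge_{a\in\mcA,\,b\in\mcB}\ \bigvee_{j\in[d]} (a_j\wedge b_j),
\]
and convert each of the $n^2$ pair-functions $\bigvee_{j} (a_j\wedge b_j)$ into a CNF by distributing the OR over the ANDs.

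First, fix a pair $(a,b)$ and expand. Distributivity gives
\[
\bigvee_{j\in[d]} (a_j\wedge b_j) = \bigwedge_{\sigma\in\{0,1\}^d}\ \bigvee_{j\in[d]} z^{\sigma}_j,
\]
where $z^\sigma_j = a_j$ if $\sigma_j=0$ and $z^\sigma_j=b_j$ if $\sigma_j=1$. Equivalently, there is one clause for each ordered partition $(P_1,P_2)$ of $[d]$, namely $\bigvee_{j\in P_1} a_j \vee \bigvee_{j\in P_2} b_j$. This matches the characterization of $0$-inputs in Lemma~\ref{lem:zeroinput}. Correctness is the standard distributive law, and can also be checked directly. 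If some $j$ has $a_j=b_j=1$, every clause contains either $a_j$ or $b_j$ and is satisfied. Conversely, if no such $j$ exists, choose $\sigma_j$ so that the selected literal is $0$ for every $j$; that clause is then falsified.

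Second, check exactness. Each such clause involves $d$ variables, one per coordinate, and they are distinct because $a_j$ and $b_j$ are different input variables. Here one should note that $a$ and $b$ are distinct vectors, from $\mcA$ and $\mcB$ respectively, so they share no variables. Each clause therefore has exactly $d$ distinct variables, and the conjunction over all pairs is an exact $d$-CNF.

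Finally, count: there are $n^2$ pairs and $2^d$ clauses per pair, for a total of at most $n^2 2^d = n^2\cdot 2^{O(d)}$ clauses. Duplicates across pairs can only lower the count.

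There is no real obstacle. The only care needed is the exactness, that is, the distinctness of variables within a clause, which follows because the two vectors in a pair are disjoint sets of input variables.
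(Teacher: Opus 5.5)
Your proposal is correct and matches the paper's proof: the paper also expands the brute-force circuit $C_2$ by distributivity into $\bigwedge_{i,j\in[n]}\bigwedge_{S\subseteq[d]}\bigl(\bigvee_{p\in S}a_{i,p}\vee\bigvee_{q\notin S}b_{j,q}\bigr)$, which has $n^2 2^d$ clauses, each on exactly $d$ distinct variables. Your explicit checks of the distributive identity and of exactness (disjointness of the variables of $a_i$ and $b_j$) fill in details the paper leaves implicit.
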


        Given the above claims, we can prove that for any choice of $r,s\in \{2,\ldots, 2nd\}$, $\Intnd$ satisfies the conclusion of Theorem~\ref{thm:Jukna} for $t = n^{1+o(1)}$, and hence Theorem~\ref{thm:Jukna} cannot be used to prove a near-quadratic lower bound. This is done as follows.
        
        \begin{itemize}
        \item If $s \geq d+1$, we note that $\Intnd \leq \bigvee_{i=1}^d a_{1i}$ and hence there is a subset $X'$ of at most $s-1$ variables such that $\Intnd \leq \bigvee_{x\in X'}x.$ For the remainder of the argument, assume that $s \leq d.$
        \item If $r\leq r_0$ for a large enough constant $r_0$ (chosen below), then we use 
        Claim~\ref{clm:rsmall}, which gives us an exact $r$-DNF $\psi_0$ such that $\Intnd\leq \psi_0$. Note that $d^r\leq d^{r_0}\leq n^{o(1)}$ as $d\leq n^{o(1)}$ and hence we are done. For the remainder of the argument, assume that $r > r_0.$
        \item If $s \in [d/2,d],$ then we consider the instance of $\mathsf{Int}_{n,s}$ obtained by setting the last $d-s$ entries in each vector in the input lists $A,B$ to $0.$ Note that $\mathsf{Int}_{n,s}\leq \Intnd$ and by Claim~\ref{clm:slarge}, we can write it as an exact $s$-CNF $\varphi$ with at most $n^2\cdot 2^{O(s)}$ clauses. As $s \geq d/2$ and $d\geq c_0\log n$, the number of clauses is at most $(r-1)^s$ as long as $r_0$ and $c_0$ are large enough. Hence we are done in this case. 
        \item If $s < d/2,$ then we use Claim~\ref{clm:ssmall} to directly obtain an exact $s$-CNF $\varphi_1\leq \Intnd$ with at most $nd\cdot 2^{O(s)}\leq n^{1+o(1)}\cdot (r-1)^s$ as $d\leq n^{o(1)}$ and for a large enough constant $r_0.$ This finishes the argument.
        \end{itemize}
        
        \subsection{Proof of Claims}

        In this section, we sketch the proofs of the above claims, all of which are straightforward.

        \begin{proof}[Proof of Claim~\ref{clm:rsmall}]
            This can be done by writing down an exact $r$-DNF $\psi_0$ that is zero exactly when one of the vectors $a_1,\ldots, a_r$ in the list $A$ to $0$. Note that this implies that $\Intnd$ is also $0$ and hence $\Intnd\leq \psi_0.$ We can write down such a $\psi_0$ by
            \[
            \psi_0 = \bigvee_{i_1,\ldots,i_r\in [d]} a_{1,i_1}\wedge a_{2,i_2}\cdots \wedge a_{r,i_r}.
            \]
            This clearly has $d^r$ terms.
        \end{proof}

        \begin{proof}[Proof of Claim~\ref{clm:slarge}]
            This can be done by writing out the monotone depth-$3$ circuit $C_2$ from the previous section as an exact $d$-CNF. More precisely, we have
            \[
            \varphi_0(A,B) = \bigwedge_{i,j\in [n]}\bigwedge_{S\subseteq [d]}(\bigvee_{p\in S} a_{i,p}\vee \bigvee_{q\not\in S} b_{j,q}).
            \]
            This has the required properties.
        \end{proof}

        \begin{proof}[Proof of Claim~\ref{clm:ssmall}]
            We show that for any set $Y = \{y_1,\ldots,y_d\}$ of $d$ Boolean variables, there is an exact $s$-CNF $\varphi'$ with $d\cdot 2^{O(s)}$ clauses such that 
            \begin{equation}
                \label{eq:randomcnf}
                \varphi'(y) = 1\Longrightarrow |y| > d/2.
            \end{equation}
            Given this, we can construct $\varphi_0$ as follows
            \[
            \varphi_1 = \bigwedge_{i=1}^n \varphi'(a_i) \wedge \bigwedge_{j=1}^n \varphi'(b_j).
            \]
            If $\varphi_0(A,B) = 1$, then each vector has Hamming weight greater than $d/2$, implying that all pairs intersect and hence $\varphi_0 \leq \Intnd$. Moreover, the total size is $2nd\cdot 2^{O(s)} = nd\cdot 2^{O(s)}.$

            It remains to construct $\varphi'.$ This is done by a standard random argument. Pick $T$ clauses $C_1,\ldots, C_T$ of size exactly $s$ independently and uniformly at random from among the variables $Y$ and set $\varphi' = C_1\wedge C_2 \wedge \cdots \wedge C_T.$ For any setting $y$ to the variables of weight at most $d/2$, the probability that a single clause is not satisfied is at least
            \[
            \frac{\binom{d/2}{s}}{\binom{d}{s}} \geq \frac{(d/2s)^s}{(ed/s)^s} = \frac{1}{2^{O(s)}}
            \]
            where the first inequality uses Stirling approximations. Thus, the probability that $\varphi'(y) = 1$ is at most
            \[
            \left(1-\frac{1}{2^{O(s)}}\right)^T < \frac{1}{2^d}
            \]
            for $T = d\cdot 2^{O(s)}.$ By the probabilistic method, this implies that there is a $\varphi'$ with the required properties. This finishes the argument.
        \end{proof}

        \bibliographystyle{plainurl}
	{\small
		\bibliography{references}
        }
\end{document}